  \let\oldparagraph\paragraph
  \renewcommand{\paragraph}{
    \@ifstar
      \xxxParagraphStar
      \xxxParagraphNoStar
  }
  \newcommand{\xxxParagraphStar}[1]{\oldparagraph*{#1}\mbox{}}
  \newcommand{\xxxParagraphNoStar}[1]{\oldparagraph{#1}\mbox{}}
  \let\oldsubparagraph\subparagraph
  \renewcommand{\subparagraph}{
    \@ifstar
      \xxxSubParagraphStar
      \xxxSubParagraphNoStar
  }
  \newcommand{\xxxSubParagraphStar}[1]{\oldsubparagraph*{#1}\mbox{}}
  \newcommand{\xxxSubParagraphNoStar}[1]{\oldsubparagraph{#1}\mbox{}}
\patchcmd\longtable{\par}{\if@noskipsec\mbox{}\fi\par}{}{}
\def\maxwidth{\ifdim\Gin@nat@width>\linewidth\linewidth\else\Gin@nat@width\fi}
\def\maxheight{\ifdim\Gin@nat@height>\textheight\textheight\else\Gin@nat@height\fi}
\def\fps@figure{htbp}
  \renewcommand*\contentsname{Table of contents}
  \newcommand\contentsname{Table of contents}
  \renewcommand*\listfigurename{List of Figures}
  \newcommand\listfigurename{List of Figures}
  \renewcommand*\listtablename{List of Tables}
  \newcommand\listtablename{List of Tables}
  \renewcommand*\figurename{Figure}
  \newcommand\figurename{Figure}
  \renewcommand*\tablename{Table}
  \newcommand\tablename{Table}
\newtheorem{proposition}{Proposition}[section]
\newtheorem*{remark}{Remark}
\newtheorem{lemma}{Lemma}[section] 
\newtheorem{definition}{Definition}[section]
\definecolor{danubeBlue}{RGB}{50,116,161}
\newcommand{\TrainStartDate}{2024-02-28}
\newcommand{\TrainEndDate}{2025-02-28}
\newcommand{\NumberOfProduct}{$69$}
\newcommand{\NumObsRiver}{$4,692$}
\newcommand{\NumObsChina}{$9149$}
\newcommand{\anon}{1}
\begin{document}

\def\spacingset#1{\renewcommand{\baselinestretch}%
{#1}\small\normalsize} \spacingset{1}


\ifnum\anon=1
  \title{\bf Separation-based causal discovery for extremes}

  \author{
    Junshu Jiang$^{\rm a,*}$\hspace{.2cm},
    Jordan Richards$^{\rm b}$,
    Rapha\"el Huser$^{\rm a}$,
    David Bolin$^{\rm a}$\\
    $\;^{\rm a}$ Statistics Program, CEMSE Division,\\
    King Abdullah University of Science and Technology, Saudi Arabia.\\
    $\;^{\rm b}$ School of Mathematics and Maxwell Institute for Mathematical Sciences,\\
    University of Edinburgh, Edinburgh, UK.
  }

  \maketitle

  \begingroup
  \renewcommand\thefootnote{*}
  \footnotetext{Corresponding author: junshu.jiang@kaust.edu.sa.}
  \endgroup

  \begingroup
  \renewcommand\thefootnote{}
  \footnotetext{The authors gratefully acknowledge funding from the King Abdullah University of Science and Technology (KAUST) Office of Sponsored Research (OSR) under Award No.\ OSR-CRG2020-4394 and ORFS-CRG12-2024-6399.}
  \endgroup
\fi
\bigskip
\begin{abstract}
Structural causal models (SCMs), with an underlying directed acyclic graph (DAG), provide a powerful analytical framework to describe the interaction mechanisms in large-scale complex systems. However, when the system exhibits extreme events, the governing mechanisms can change dramatically, and SCMs with a focus on rare events are needed. We propose a new class of SCMs, called XSCMs, which leverage transformed-linear algebra to model causal relationships among extreme values. Similar to traditional SCMs, we prove that XSCMs satisfy the causal Markov and causal faithfulness properties with respect to partial tail (un)correlatedness. This enables estimation of the underlying DAG for extremes using separation-based tests, and makes many state-of-the-art constraint-based causal discovery algorithms directly applicable. We further consider the problem of undirected graph estimation for relationships among tail-dependent (and potentially heavy-tailed) data. The effectiveness of our method, compared to alternative approaches, is validated through simulation studies on large-scale systems with up to 50 variables, and in a well-studied application to river discharge data from the Danube basin. Finally, we apply the framework to investigate complex market-wide relationships in China's derivatives market.
\end{abstract}

\noindent%
{\it Keywords:} China's derivatives market, Directed acyclic graph, Extremal Markov network, Extreme value theory, Structural causal model, Tail dependence 
\vfill

\newpage
\spacingset{1.8} 

\section{Introduction}\label{sec:intro}
A primary goal of statistical learning is to characterize dependencies among random variables. Beyond capturing statistical associations, a key consideration is to uncover causal relationships, as these provide key insights into counterfactual questions and evaluation of intervention effects~\citep[see, e.g.,][]{hammoudeh2020relationship,shojaie2022granger}. While statistical associations alone do not imply causation, causal discovery from observational data becomes possible under strong assumptions, such as causal Markovianity, causal faithfulness, and causal sufficiency, which connect the data-generating process to testable statistical properties~\citep[][Section~2]{pearl2009causality}. By contrast with designed experiments crafted specifically to answer causal questions of interest~\citep[see, e.g.,][]{imai2013experimental}, the development of statistical approaches that can infer causal relationships from observational data is especially valuable when it is prohibitively costly, technically challenging, or ethically questionable to conduct such designed experiments. This includes 
applications in policy making, financial regulation, health-related research, ecological modeling, among others.

Structural causal models (SCMs), also known as structural equation models, assume that there is an induced directed acyclic graph (DAG) driving a set of structural equations that determine the system's behavior. This underlying DAG provides a (typically sparse) graphical representation of causal interactions among the random variables under consideration. Compared to undirected graphical models, SCMs are more informative, using directions of edges to convey causation or directional influence between variables.

SCMs designed for extreme events are particularly important. In some cases, causal relationships may only manifest during extreme events, or may differ significantly between extreme and normal conditions. A typical example is in modern portfolio management, which often relies on holding uncorrelated assets to reduce risk. During market crashes, however, the dependence strength among assets can increase dramatically, undermining portfolio diversification. Despite its importance, causal modeling and discovery for extreme events is a relatively new research direction that has gained increasing attention over the last five years \citep[see, e.g.,][]{gnecco2021causal,pasche_causal_2023,bodik2024causality,bodik2024granger}. In their seminal work, \cite{gnecco2021causal} introduced a linear SCM for heavy-tailed random variables and proposed using the so-called causal tail coefficient as a measure of causal influence between variables. The authors further developed an algorithm to estimate the partial causal order by iteratively computing the causal tail coefficient between pairs of variables. \cite{pasche_causal_2023} extended this approach by considering a parametric form based on the generalized Pareto distribution, which allows estimating the conditional causal tail coefficient while accounting for unobserved confounders. \cite{bodik2024causality} later adapted the causal tail coefficient to the time series context while~\cite{bodik2024granger} developed an algorithm to estimate the underlying DAG utilizing the tail causal coefficient for time series data. Furthermore, \cite{paluvs2024causes} introduced an information-theoretic framework to causal discovery in extremes based on Rényi information transfer---an approach based on a generalization of Shannon's entropy that uses a tunable parameter to characterize the contribution of rare events. \cite{kluppelberg2021estimating} developed a recursive max-linear representation for extremal dependence and proposed a top-down approach to estimating the underlying DAG structure. Later, \cite{krali2025causal} adapted this top-down approach for causal discovery. Recently, \cite{engelke2025extremes} studied the asymptotic behavior of general structural causal models under extreme events. They showed that the extremal dependence structure induced by an SCM can differ from the original causal graph and may correspond to a pruned version of it. Based on this insight, they proposed a pruning procedure to recover aspects of the extremal causal structure. A broader literature review of the intersection between extreme statistics and causal discovery can be found in \cite{chavez2024causality}.

A closely related research area is that of graphical modeling for extremal dependence~\citep{wan2023graphical,tran2024estimating,lederer2023extremes,engelke2020graphical,engelke2021learning,engelke2021sparse,kluppelberg2021estimating,engelke2022structure,krali2023heavy,gong2024partial}. Similarly to SCMs, graphical models for extremal dependence also provide a graphical representation of variable interactions, and some DAG-based models have the potential to be extended to capture extremal causal relationships \citep{krali2023heavy}. \cite{engelke2020graphical} introduced the concept of extremal conditional independence within the framework of multivariate Pareto distributions and proposed a method for estimating undirected graph structures in H\"usler--Reiss models of extremal dependence. A comprehensive review of graphical models for extremal dependence is provided in~\cite{engelke2024graphical}.

Most extremal causal modeling approaches rely on the causal tail coefficient introduced by~\cite{gnecco2021causal}. However, this metric is intrinsically bivariate and not sufficient to ensure causal Markovianity and causal faithfulness of classical SCMs, thus failing to connect separation in the underlying DAG with testable statistical properties. In this paper, we instead propose a new SCM for tail-dependent (and potentially heavy-tailed) extremes, the XSCM, that incorporates the transformed linear algebra framework for heavy-tailed random variables introduced by~\cite{cooley2019decompositions}. We propose using the partial tail-correlation coefficient introduced by~\cite{gong2024partial} and~\cite{kim2022hypothesis} to quantify causal relationships between heavy-tailed variables. We show that, when endowed with this metric, the XSCM satisfies the so-called ``tail-causal Markov'' and ``tail-causal faithfulness'' conditions, which are analogous to the causal Markov and faithfulness conditions of classical SCMs. These two conditions imply an equivalence between separation in the induced DAG and partial tail uncorrelation, which guarantees that existing constraint-based causal discovery algorithms can be directly adapted to the extremes context, when applied to both cross-sectional and time series data. Additionally, a by-product of our proposed method is that we can simultaneously estimate the undirected graph structure of the models described in~\cite{gong2024partial} and~\cite{kim2022hypothesis}. The effectiveness of our method compared to existing approaches is demonstrated through comparative simulations on large-scale systems, and in an application to river discharge data from the Danube basin. We also investigate market-wide causal relationships in China's futures market for extremely large trading activities. Here, the estimated DAG aligns with actual market sector categorizations, and the edge directions provide insights into key information flow.

The remainder of this paper is organized as follows. Section~\ref{sec:bg} provides background on SCMs and relevant multivariate extreme value theory. Section~\ref{sec:extremal_causality} introduces our proposed XSCM for modeling extremal events and capturing their causal interactions. Section~\ref{sec:simulation} presents simulation studies for models defined with both directed and undirected graphs. Section~\ref{sec:app} presents the two data applications, and Section~\ref{sec:conclusion} concludes with a discussion on future research directions. Proofs and further details are provided in Appendices, while supporting code and data are available at \if1\anon{\url{https://github.com/junshujiang/tailSCM}}\else{\url{https://github.com/anonymous/anonymous}}\fi.

\section{Background}
\label{sec:bg}

In Section~\ref{sec:causal:causal_graph_model}, we first introduce some commonly used graph notation and then provide preliminaries on structural causal models (SCMs), while Section~\ref{sec:bg:RegularVarying} presents the partial tail-correlation coefficient for regularly varying random vectors.

\subsection{Graph notation and structural causal models}
\label{sec:causal:causal_graph_model}
A graph $\mathcal{G} := (\mathcal{V}, \mathcal{E})$ consists of a set of $p$ vertices $\mathcal{V} := \{v_1, \dots, v_p\}$ and a set of edges $\mathcal{E} := \{(v_i, v_j) \in \mathcal{V} \times \mathcal{V}\}$. An undirected graph only has bidirectional edges, i.e., $(v_i, v_j) \in \mathcal{E}$ if and only if $(v_j, v_i) \in \mathcal{E}$. A directed graph, in contrast, contains edges with a specified direction, i.e., one can have $(v_i, v_j) \in \mathcal{E}$ and $(v_j, v_i) \notin \mathcal{E}$. In this case, we write $v_i \rightarrow v_j$ to indicate that there is a directed edge from $v_i$ to $v_j$.

A sequence of vertices $(v_{k_0}, v_{k_1}, \dots, v_{k_n})$ is called a \textit{path} from $v_{k_0}$ to $v_{k_n}$ if each consecutive pair is connected by an edge in either direction, i.e., $(v_{k_i}, v_{k_{i+1}}) \in \mathcal{E}$ or $(v_{k_{i+1}}, v_{k_i}) \in \mathcal{E}$ for $i = 0, \dots, n-1$. It is called a \textit{directed path} if $(v_{k_i}, v_{k_{i+1}}) \in \mathcal{E}$ for $i = 0, \dots, n-1$. A directed graph is said to be \textit{acyclic} if there exists no directed path that starts and ends at the same vertex. 
In a directed acyclic graph (DAG), vertex $v_j$ is called a \textit{parent} of $v_i$ if $(v_j, v_i) \in \mathcal{E}$, and $v_i$ is a \textit{descendant} of $v_j$ if there exists a directed path from $v_j$ to $v_i$.


We now review background on structural causal models (SCMs).

\begin{definition}[Structural causal model (\citealp{peters2017elements}; Definition~6.2)]\label{def:SCM}
    
A structural causal model (SCM) for a $p$-dimensional random vector $\bm{X}:=(X_1,\dots,X_p)^\top$ is defined as the pair $\mathcal{S}:=(\bm{Z},\mathcal{F})$ where $\bm{Z}=(Z_1,\dots,Z_p)^\top$ are source random variables and $\mathcal{F}$ is a collection of $p$ structural equations of the form $X_i=f_i({\rm Pa}(X_{i}),Z_i)$ for each $X_i$, with ${\rm Pa}(X_{i})\subset \{X_1,\dots,X_p\}\setminus \{X_i\}$ denoting the \textit{causal parents} of $X_i$, and $f_i$ being measurable functions, for all $i=1\dots,p$.
\end{definition}
The underlying directed graph $\mathcal{G}:=(\mathcal{V},\mathcal{E})$ induced by an SCM $\mathcal{S}$ is assumed to be acyclic throughout this work; its vertices are given by $\mathcal{V}:=\{v_1,\dots,v_p\}$ and directed edges by $\mathcal{E}:=\{(v_j,v_i)\in \mathcal{V}\times \mathcal{V}:X_j \in {\rm Pa}(X_{i})\}$. As an illustrative example, consider a simple SCM for the random vector $\bm{X}=(X_1,X_2)^\top$ with Gaussian source variables ${\bm{Z}=(Z_1,Z_2)^\top\sim \mathcal{N}_2(\bm{0}, \bm{I})}$ and structural equations ${X_1=Z_1}$ and ${X_2=X_1+Z_2}$.
The induced graph $\mathcal{G}$ consists of the vertex set $\mathcal{V}=\{v_1,v_2\}$ and the edge set ${\mathcal{E}=\{(v_1,v_2)\}}$. The causal parents of $X_1$ is the empty set, ${\rm Pa}(X_1)=\emptyset$, and the causal parents of $X_2$ are ${\rm Pa}(X_2)=\{X_1\}$. Thus, $\mathcal{G}$ has only one edge, $v_1\rightarrow v_2$.

Statistical inference for graphs involves inferring $\mathcal{G}$ based on observations of $\bm{X}$. Constraint-based methods \citep[see, e.g.,][]{spirtes2001anytime,runge2015identifying} typically convert this task into the problem of detecting all separation relations in the graph. For DAGs, the notion of d-separation is used to represent such separation relations.

\begin{definition}[Collider]\label{def:collider}
    Given three connected vertices $v_i,v_j,v_k \in \mathcal{V}$, the vertex $v_k$ is called a \textit{collider} of $v_i$ and $v_j$ if both arrows are incoming, i.e., $v_i\rightarrow v_k \leftarrow v_j$.
\end{definition}

\begin{definition}[d-separation~(\citealp{pearl2014probabilistic}; Sec. 3.3.1)]\label{def:d_sep}
    Given a DAG $\mathcal{G}:=(\mathcal{V},\mathcal{E})$ with vertex set $\mathcal{V}=\{v_1,\dots,v_p\}$, for any two vertices $v_i,v_j\in \mathcal{V}$ and a separation set ${\bm{\mathcal{S}}_v\subset \mathcal{V}\setminus\{v_i,v_j\} }$, we say that $v_i$ and $v_j$ are d-separated given $\bm{\mathcal{S}}_v$ in the graph $\mathcal{G}$, written as $v_i \perp_{\mathcal{G}} v_j\mid\bm{\mathcal{S}}_v$, if every path between $v_i$ and $v_j$ is blocked by $\bm{\mathcal{S}}_v$. A path is blocked if either: i) it contains a \textit{collider} within the path and neither the collider itself nor its descendant are in $\bm{\mathcal{S}}_v$; or ii) it contains a \textit{non-collider} that is in $\bm{\mathcal{S}}_v$.
\end{definition}

Two desirable properties of classical Gaussian-based SCMs are the \textit{causal Markov} and \textit{causal faithfulness} conditions, which connect d-separation in the underlying DAG and conditional independencies in the distribution of $\bm{X}$. For an SCM over the random vector $\bm{X}=(X_1,\dots,X_p)^\top$ with induced graph $\mathcal{G}=(\mathcal{V},\mathcal{E})$, the causal Markov condition states that, for any two vertices $v_i,v_j\in \mathcal{V}$, and any separation set $\bm{\mathcal{S}}_v\subset \mathcal{V}\setminus\{v_i,v_j\}$ such that  $v_i \perp_{\mathcal{G}} v_j\mid\bm{\mathcal{S}}_v$, variables $X_i$ and $X_j$ are conditionally independent given $\bm{\mathcal{S}}:=\{X_k\}_{v_k\in \bm{\mathcal{S}}_v}$. The causal faithfulness condition states that the reverse implication holds, meaning that conditional independence between $X_i$ and $X_j$ given $\bm{\mathcal{S}}$ implies d-separation, $v_i \perp_{\mathcal{G}} v_j\mid\bm{\mathcal{S}}_v$. 

When both conditions hold, the skeleton of the underlying DAG of an SCM can be estimated by identifying all triplets $(v_i,v_j,\bm{\mathcal{S}}_v)$ such that $v_i\perp_{\mathcal{G}} v_j\mid\bm{\mathcal{S}}_v$.  Note, however, that d-separation information does not uniquely determine the direction of edges in the graph $\mathcal{G}$. Two DAGs, $\mathcal{G}_1$ and $\mathcal{G}_2$, are said to be \textit{Markov equivalent} if they encode the same d-separation relations. The possibility of Markov equivalence causes difficulties in estimating directed graphs. For example, the two DAGs ${v_1\rightarrow v_2 \rightarrow v_3}$ and $v_1\leftarrow v_2 \leftarrow v_3$ are Markov equivalent, and cannot be distinguished by d-separation information.

SCMs can be extended from random vectors to multivariate time series~\citep[see e.g.,][]{runge2018causal}. The structural equations for a $p$-variate time series, $\{X_{i,t}\}_{t\in \mathcal{T}}$ with ${i=1,\dots,p}$ and time index set $\mathcal{T}$, are defined as $X_{i,t}:=f_{i,t}\{{\rm Pa}(X_{i,t}),Z_{i,t}\}$ for $t\in\mathcal{T}$ and  $\{Z_{i,t}\}_{i=1,\dots,p,t\in\mathcal{T}}$ are the source variables. The underlying graph is a DAG with vertex set ${\mathcal{V}:=\{v_{i,t}\}_{i=1,\dots,p;~t\in\mathcal{T}}}$ and edge set ${\mathcal{E}:=\{(v_{j,t^\prime},v_{i,t})\in \mathcal{V}\times \mathcal{V}:X_{j,t^\prime} \in {\rm Pa}(X_{i,t})\}}$ for $i,j=1,\dots,p$ and $t,t^\prime\in\mathcal{T}$. In this setting, the following assumptions are commonly made to simplify inference: i) \textit{Stationarity}: the functions defining the structural equations, ${f_{i,t}(\cdot,\cdot), i=1,\dots,p}$, $t\in\mathcal{T}$, are time-invariant; ii) \textit{Maximum time lag}: there is a maximum time lag $\tau>0$ above which direct causal effects do not exist, i.e., edges $(v_{j,t^\prime},v_{i,t})$ for $|t-t^\prime|>\tau$ are not permitted; iii) \textit{No backward causality}: edges pointing from the future to the past are prohibited, i.e., there are no edges $(v_{j,t^\prime},v_{i,t})$ with $t^\prime>t$.

\subsection{Partial tail-correlation for regularly varying random vectors}
\label{sec:bg:RegularVarying}

Multivariate regular variation is a common assumption used to describe the tail behavior of a random vector. Essentially, it assumes that the (joint) probability of extreme events converges to a valid limit measure, and asymptotically decays according to a power law at a rate determined by the tail index $\alpha>0$.

\begin{definition}[Multivariate regular variation (\citealp{resnick2007heavy}; Chapter~6)]\label{def:rv}
    A $p$-dimensional random vector $\bm{X}\in \mathbb{R}^p_+$ is regularly varying ($RV$) with tail index $\alpha>0$, denoted by $\bm{X}\in {\rm RV}^p_+(\alpha)$, if there exists a sequence $b_n\to \infty$ such that $n\mathbb{P}\{b_n^{-1}\bm{X} \in\cdot\}\xrightarrow{v} v_{\bm{X}}(\cdot)$ as $n\to\infty$, where $v_{\bm{X}}(\cdot)$ is a Radon measure, which is a Borel measure that is finite on all compact sets and inner regular on $\mathbb{E}^p_+:=[0,\infty]^p\setminus{\{\bm{0}\}}$, and $\xrightarrow{v}$ denotes vague convergence (\citealp{resnick2007heavy}; Chapter~3.3.5).
\end{definition}

The limit measure $v_{\bm{X}}(\cdot)$ satisfies the homogeneity property $v_{\bm{X}}(rB)=r^{-\alpha}v_{\bm{X}}(B)$ for any $r>0$ and any Borel subset $B\subset \mathbb{E}^p_+$. This allows decomposing the limit measure into a radial measure uniquely defined by the tail index $\alpha$ and an angular mass measure $H_{\bm{X}}(\cdot)$, which is defined on the positive part of the unit $(p-1)$-sphere $\mathbb{S}^{p-1}_+:=\{\bm{x}\in \mathbb{R}_+^p: \|\bm{x}\|_2=1\}$, where $\|\cdot\|_2$ denotes the $l^2$ norm. Specifically,
$$
{v_{\bm{X}}(\{\bm{x}\in \mathbb{E}^p_+: \|\bm{x}\|_2\geq r, \bm{x}/\|\bm{x}\|_2\in B_H \})}=r^{-\alpha}H_{\bm{X}}(B_H),
$$ 
for $r>0$ and any Borel subset $B_H \subset\mathbb{S}^{p-1}_+ $. The angular mass measure can be normalized to give a valid probability measure, which is denoted by $N_{\bm{X}}(\cdot):=H_{\bm{X}}(\cdot)/m$ where ${m=\int_{\mathbb{S}^{p-1}_+ }{\rm d}H_{\bm{X}}(\bm{w})}$ is the total mass of $H_{\bm{X}}(\cdot)$.

\cite{kim2022hypothesis} consider (without loss of generality) the case $\alpha=2$ and construct an inner product space  $V^q$ for random variables $X\in {\rm RV}^1_+(2)$ using the transformed-linear operations $\oplus$ and $\circ$, defined as $X_1\oplus X_2=t\{t^{-1}(X_1)+t^{-1}(X_2)\}$ and $a\circ X_1=t\{at^{-1}(X_1)\}$, for $X_1,X_2\in {\rm RV}^1_+(2)$, $a\in \mathbb{R}$, and transformation function ${t(\cdot)=\log\{1+\exp(\cdot)\}}$. When endowed with this transformed-linear algebra, the inner product space preserves regular variation. The space $V^q$ contains elements that can be linearly spanned through the operations $\oplus$ and $\circ$ by a $q$-dimensional random vector $\bm{Z}:=(Z_1,\dots,Z_q)^\top$ with independent components $Z_i\in {\rm RV}^1_+(2)$ satisfying $\lim_{n\rightarrow \infty}n\mathbb{P}\{Z_i<\exp(-\kappa b_n)\}=0$ for $\kappa>0,i=1,\dots,q$. That is,
\[
V^q=\left\{X\in {\rm RV}^1_+(2):X=\bm{a}^\top\circ \bm{Z}=(a_{1}\circ Z_1)\oplus \dots \oplus (a_{q}\circ Z_q), \bm{a}\in \mathbb{R}^q\right\},
\]
where $\bm{a}:=(a_1,\dots,a_q)^\top\in\mathbb{R}^q$ and the inner product is defined as $\langle X, Y \rangle := \bm{a}_x^\top \bm{a}_y$ for $X = \bm{a}_x^\top \circ \bm{Z}$ and $Y = \bm{a}_y^\top \circ \bm{Z}$. In this study, we consider a subspace $V^q_+$ that requires non-negative coefficients $\bm{a}\in \mathbb{R}^q_+$. The subspace $V^q_+$ of regularly varying random vectors is not restrictive as the corresponding angular measure $H_{\mathbf{X}}(\cdot)$ for any $\bm{X}\in {\rm RV}^p_+(2)$ can be approximated using a sequence of transformed-linear combinations of independent regularly varying random variables \cite[Proposition 4; ][]{cooley2019decompositions} with increasing $q$. Furthermore, the inner product in $V^q_+$ can be interpreted as a measure of extremal dependence that is analogous to the covariance between Gaussian random variables. Specifically, the tail pairwise dependence matrix (TPDM;~\citealp{cooley2019decompositions})  $\bm{\Sigma}_{\bm{X}} := (\sigma_{ij})_{i,j=1}^p \in \mathbb{R}_+^{p \times p}$ stores the pairwise inner product between elements of $\bm{X}:=(X_1,\dots,X_{p})^\top\in {\rm RV}^{p}_+(2)$, where $X_{i}\in V^q_+$ for $i=1,\dots,p$. We use $\sigma_{ij}:=\langle X_i,X_j\rangle$ to denote the $(i,j)$-th element of $\bm{\Sigma}_{\bm{X}}$, which admits the conditional limit form
\begin{equation*}
    \sigma_{ij}=m\int_{\mathbb{S}^{1}_+} \omega_i\omega_j {\rm d}N_{\bm{X}}(\bm{\omega})=m\lim_{r\to \infty} \mathbb{E}\left[\frac{X_iX_j}{R^2}\mid R>r\right]=m \lim_{r\to \infty} \mathbb{E}[W_iW_j\mid R>r],
\end{equation*}
with radius $R=\|\bm{X}\|_2=(\sum_{i=1}^p X_i^2)^{1/2}$ and angles ${\bm{W}=(W_1,\dots,W_p)^\top={\bm{X}/ R}}$. This quantity can be estimated empirically using independent samples $\{\bm{x}_l\}_{l=1}^{n}$ of $\bm{X}$ as
\begin{equation}
    \label{eq:estimationSigma11}
    \hat{\sigma}_{ij}=\hat{m}N^{-1}\sum_{l=1}^{n}\frac{x_{li}x_{lj}}{r^2_l}\mathds{1}(r_l>r_0),
\end{equation}
where $r_l=\|\bm{x}_l\|_2$, $r_0>0$ is a suitably-chosen high threshold which can be set to the $q$-quantile (with $q$ close to $1$) of $\{\|\bm{x}_l\|_2\}_{l=1}^{n}$, and $N=\sum_{l=1}^n \mathds{1}(r_l>r_0)$ is the number of threshold exceedances, with $\mathds{1}(\cdot)$ the indicator function. Moreover, $\hat{m}$ is the estimated total mass of $m=H_{\bm{X}}(\mathbb{S}^{p-1}_+)$ and may be calculated as 
\begin{equation}
    \label{eq:estimationm}
    \hat{m}=r_0^{2}n^{-1}\sum_{l=1}^n \mathds{1}(r_l>r_0).
\end{equation}
When $\bm{X}$ is marginally standardized in the sense that $\sigma_{ii}=1$ for $i=1\dots,p$, the total angular mass is $m=p$. Such a standardization can be performed either through a tail index-based transformation when the original data are marginally regularly varying, or via an empirical rank-based transformation, which standardizes the margins to the Pareto distribution with unit scale and shape 2; see e.g., \cite{jiang2024efficient}. The marginal tail index can be estimated using, for example, the~\cite{hill1975simple} estimator.

Analogously to the partial correlation in Gaussian random vectors, \cite{gong2024partial} and \cite{kim2022hypothesis} further introduced the partial tail-correlation coefficient (PTCC) as follows. For any triplet $(X_i,X_j,\bm{\mathcal{S}})$, where ${\bm{\mathcal{S}}\subset\{X_1,\dots,X_p\}\setminus \{X_i,X_j\}}$, the PTCC $\gamma_{ij| \bm{\mathcal{S}}}$ measures the extremal dependence between $X_i$ and $X_j$ after adjusting for the effects of variables in $\bm{\mathcal{S}}$ via transformed-linear optimal prediction~\citep{lee2021transformed}. Let $\bm{X}^\prime:=(X_i,X_j,\bm{\mathcal{S}}^\top)^\top$. The TPDM of $\bm{X}^\prime$ can be written in block matrix form as 
$$
\bm{\Sigma}_{\bm{X}^\prime} =
\left[\begin{smallmatrix}
\bm{\Sigma}_{ij,ij} & \bm{\Sigma}_{ij,\bm{\mathcal{S}}} \\
\bm{\Sigma}_{\bm{\mathcal{S}},ij} & \bm{\Sigma}_{\bm{\mathcal{S}},\bm{\mathcal{S}}}
\end{smallmatrix}\right],
$$ where $\bm{\Sigma}_{ij,ij} \in \mathbb{R}_+^{2\times 2}$ stores the pairwise inner product for components in $(X_i,X_j)^\top$ and a similar definition applies for other block matrices in $\bm{\Sigma}_{\bm{X}^\prime}$. Then, the PTCC $\gamma_{ij| \bm{\mathcal{S}}}$ is defined as the normalized $(1,2)$-th entry of the partial tail-covariance matrix $\bm{\Sigma}_{ij| \bm{\mathcal{S}}}$, given by:
\begin{equation}
    \label{eq:PTCC}
        \bm{\Sigma}_{ij| \bm{\mathcal{S}}}=\bm{\Sigma}_{ij,ij} - \bm{\Sigma}_{ij,\bm{\mathcal{S}}}\bm{\Sigma}_{\bm{\mathcal{S}},\bm{\mathcal{S}}}^{-1}\bm{\Sigma}_{\bm{\mathcal{S}},ij}~\textrm{and}\quad \gamma_{ij| \bm{\mathcal{S}}}= \frac{(\bm{\Sigma}_{ij| \bm{\mathcal{S}}})_{12}}{\sqrt{(\bm{\Sigma}_{ij| \bm{\mathcal{S}}})_{11}(\bm{\Sigma}_{ij| \bm{\mathcal{S}}})_{22}}}\in [-1,1].
\end{equation}
If $\gamma_{ij| \bm{\mathcal{S}}} = 0$, following~\cite{gong2024partial}, we say that $X_i$ and $X_j$ are partially tail uncorrelated given $\bm{\mathcal{S}}$, and we denote it as $X_i \perp_{\rm TC} X_j\mid \bm{\mathcal{S}}$. From Equation~\eqref{eq:PTCC}, this is equivalent to the partial tail-covariance, $\sigma_{ij| \bm{\mathcal{S}}} := (\bm{\Sigma}_{ij| \bm{\mathcal{S}}})_{12},$ being zero. The estimator of the partial tail-covariance, $\hat{\sigma}_{ij| \bm{\mathcal{S}}}$, can be calculated similarly to \eqref{eq:estimationSigma11}, but from the residual components $\bm{\epsilon}:=(\epsilon_{i},\epsilon_{j})^\top$ obtained from $\bm{X}_{ij}:=(X_{i},X_{j})^\top$ after adjusting for the effect of $\bm{\mathcal{S}}$ using the estimated TPDM $\hat{\bm{\Sigma}}_{\bm{X}^\prime}$. Specifically,
\begin{equation}
    \label{eq:estimationCondSigma12}
    \bm{\epsilon}=t^{-1}(\bm{X}_{ij})-(\hat{\bm{\Sigma}}_{ij,\bm{\mathcal{S}}}\hat{\bm{\Sigma}}_{\bm{\mathcal{S}},\bm{\mathcal{S}}}^{-1})t^{-1}(\bm{\mathcal{S}}),
\end{equation}
where the $\hat{\bm{\Sigma}}_{ij,\bm{\mathcal{S}}}$ and $\hat{\bm{\Sigma}}_{\bm{\mathcal{S}},\bm{\mathcal{S}}}$ are sub-matrices of the estimated TPDM $\hat{\bm{\Sigma}}_{\bm{X}^\prime}$, with each element estimated using~\eqref{eq:estimationSigma11}. Then, $\hat{\sigma}_{ij| \bm{\mathcal{S}}}$ is estimated by
\begin{equation}
    \label{eq:estimationCondSigma11}
        \hat{\sigma}_{ij| \bm{\mathcal{S}}}=\hat{m}^\prime \frac{1}{N^\prime}\sum_{l=1}^{n}\frac{\epsilon_{li}\epsilon_{lj}}{(r^\prime_l)^2}\mathds{1}(r^\prime_l>r^\prime_0)
\end{equation} from the residuals $\{(\epsilon_{li},\epsilon_{lj})\}_{l=1}^n$, where $r^\prime_l$ is now the $l^2$-norm of residuals $\bm{\epsilon}_l$, and $r^\prime_0>0$ is a suitably-chosen high threshold selected as the $q$-quantile ($q$ close to $1$) of $\{\|\bm{\epsilon}_l\|_2\}_{l=1}^n$. The number of threshold exceedances is given by $N^\prime=\sum_{l=1}^n \mathds{1}(r^\prime_l>r^\prime_0)$. Moreover, since the residuals are not standardized, $\hat{m}^\prime$ needs to be estimated using Equation~\eqref{eq:estimationm}. 

The estimator $\hat{\sigma}_{ij \mid \bm{\mathcal{S}}}$ is asymptotically Gaussian under Condition~5.6 of \citet{kim2022hypothesis}, first introduced in \citet[][Theorem 1]{larsson2012extremal} and reproduced in Equation~\eqref{eq:condition5.6}:
\begin{equation}\label{eq:condition5.6}
    \sqrt{k} \left\{ \frac{n}{k} P \left[ \left( \frac{R}{b(n/k)}, \bm{W} \right) \in \cdot \right] 
    - \nu_\alpha \times \frac{1}{m} H_{\bm{X}}(\cdot) \right\} \xrightarrow{v} 0,
    \quad \text{as } n \to \infty ,
\end{equation}

where $\nu_\alpha$ denotes the radial measure defined by $\nu_\alpha(x,\infty] = x^{-\alpha}$ for $x>0$.
Intuitively, this condition requires that the dependence between the radial component $R$ and the angular component $\bm{W}$ becomes asymptotically negligible as the magnitude of $R$ increases. In particular, if $R$ and $\bm{W}$ are independent, the condition is automatically satisfied.

Therefore, under condition~\eqref{eq:condition5.6}, standard $t$-tests can be performed for testing the hypothesis that $\sigma_{ij| \bm{\mathcal{S}}}=0$ and $X_i \perp_{\rm TC} X_j\mid \bm{\mathcal{S}}$.

\section{Methodology}
\label{sec:extremal_causality}

In Section~\ref{sec:extremal_causality:modeling}, we construct the XSCM to model causal relationships among extremal events using the transformed-linear operations described in Section~\ref{sec:bg:RegularVarying}, and in Section~\ref{sec:extremal_causality:MarkovFaithfulness} we then discuss the connection between partial tail uncorrelation in the random vector and d-separation in the corresponding graph. The extension of the XSCM to a time series setting, and its inference using separation-based methods are presented in Sections~\ref{sec:extremal_causality:timeSeries} and~\ref{sec:extremal_causality:inference}, respectively. We further discuss how our framework can be applied to the undirected graph setting in Section~\ref{sec:extremal_causality:markovNetwork}.

\subsection{Structural causal model for extremes}
\label{sec:extremal_causality:modeling}

We define the XSCM for a multivariate regularly varying random vector ${\bm{X}={(X_1,\dots,X_p)}^\top}$ with tail index $\alpha=2$. This assumption facilitates the modeling of tail-dependence and it is without loss of generality as far as marginal transformations are concerned; see the discussion on marginal transformations in Section~\ref{sec:bg:RegularVarying}.

\begin{definition}[Extremal structural causal model (XSCM)]
    \label{def:model:transformedLinearCausality}
    An XSCM is defined as ${\mathcal{S}:=(\bm{Z},\mathcal{F})}$, where the source variables in $\bm{Z}:=(Z_1,\dots,Z_p)^\top$ are independent and identically distributed with $Z_i\in {\rm RV}^1_+(2),i=1,\dots,p$, and the collection $\mathcal{F}$ consists of $p$ structural equations in transformed-linear form:
    \begin{equation}
        \label{eq:transformedLinear}
        X_i=f_{X_i}({\rm Pa}(X_i),Z_i)=(\alpha_i\circ Z_i)\oplus \bigoplus_{X_j\in {\rm Pa}(X_i)}(\beta_{j\rightarrow i}\circ X_j),
    \end{equation}
    for $i=1,\dots,p$, where $\beta_{j\rightarrow i}>0,~\alpha_i>0,i,j=1,\dots,p$.
\end{definition}

Here, the transformed-linear operators \(\oplus\) and \(\circ\) are the same as those defined in Section~\ref{sec:bg:RegularVarying}. The XSCM can be expressed in the matrix form as $\bm{X}=\bm{A}\circ \bm{Z} \oplus \bm{B}\circ \bm{X}$ where the \textit{path coefficient matrix} $\bm{B}=(B_{ij})_{i,j=1}^p\in \mathbb{R}^{p\times p}_+$ consists of non-negative entries, with $B_{ij}=\beta_{j\rightarrow i}$ if $(v_j,v_i)\in \mathcal{E}$, and $B_{ij}=0$ otherwise. The \textit{scale matrix} $\bm{A}\in\mathbb{R}^{p\times p}_+$ is defined as $\bm{A}:={\rm diag}(\alpha_1,\dots,\alpha_p)$. The random vector $\bm{X}$ is regularly varying, i.e., $\bm{X}\in {\rm RV}^p_+(2)$, with each component satisfying $X_{i} \in V^q_+$ for $i=1,\dots,p$, and the TPDM of $\bm{X}$ can be obtained using Proposition~\ref{prop:tpdmForOnetails}, proved in Appendix~\ref{sec:Proof}.

\begin{proposition}[TPDM $\bm{\Sigma}_{\bm{X}}$ of the XSCM]\label{prop:tpdmForOnetails} 
    Let $\mathcal{S}:=(\bm{Z},\mathcal{F})$ be an XSCM with path coefficient matrix $\bm{B}$, scale matrix $\bm{A}$, and resulting random vector ${\bm{X}\in {\rm RV}^p_+(2)}$. Then, a direct representation for $\bm{X}$ is $\bm{X}=(\bm{I}-\bm{B})^{-1}\bm{A}\circ \bm{Z}$, where ${(\bm{I}-\bm{B})^{-1}\in\mathbb{R}^{p\times p}_+}$ is a (non-singular) non-negative matrix, and the TPDM of $\bm{X}$ is given by $\bm{\Sigma}_{\bm{X}}={(\bm{I}-\bm{B})}^{-1}\bm{A}^{2}{[{(\bm{I}-\bm{B})}^{-1}]}^\top,$ where $\bm{I}$ is the $p\times p$ identity matrix.
\end{proposition}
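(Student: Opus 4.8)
The plan is to exploit the fact that the inverse transformation $t^{-1}(y)=\log\{\exp(y)-1\}$ turns the transformed-linear system into an ordinary linear one. Writing $\bm{Y}:=t^{-1}(\bm{X})$ and $\bm{U}:=t^{-1}(\bm{Z})$ for the componentwise images, and recalling that for a non-negative matrix $\bm{M}$ the $i$-th component of $\bm{M}\circ\bm{X}$ is $t\{\sum_j M_{ij}t^{-1}(X_j)\}$, I would apply $t^{-1}$ to the defining identity $\bm{X}=\bm{A}\circ\bm{Z}\oplus\bm{B}\circ\bm{X}$. Since $t^{-1}(X\oplus Y)=t^{-1}(X)+t^{-1}(Y)$, this collapses to the genuine linear fixed-point equation $\bm{Y}=\bm{A}\bm{U}+\bm{B}\bm{Y}$, i.e.\ $(\bm{I}-\bm{B})\bm{Y}=\bm{A}\bm{U}$. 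Solving and mapping back through $t$ then yields the claimed direct representation $\bm{X}=(\bm{I}-\bm{B})^{-1}\bm{A}\circ\bm{Z}$, once the invertibility of $\bm{I}-\bm{B}$ is established.

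The invertibility and non-negativity of $(\bm{I}-\bm{B})^{-1}$ both follow from the acyclicity of the induced graph $\mathcal{G}$. Because $\mathcal{G}$ is a DAG, its vertices admit a topological ordering in which every parent precedes its children; after relabeling according to this order, $B_{ij}=\beta_{j\rightarrow i}$ is nonzero only when $j$ precedes $i$, so $\bm{B}$ becomes strictly lower-triangular and hence nilpotent with $\bm{B}^{p}=\bm{0}$. Consequently $\bm{I}-\bm{B}$ is lower-triangular with unit diagonal, so $\det(\bm{I}-\bm{B})=1\neq0$ and the inverse exists; moreover the finite Neumann series $(\bm{I}-\bm{B})^{-1}=\sum_{k=0}^{p-1}\bm{B}^{k}$ is a sum of products of the non-negative matrix $\bm{B}$ and is therefore entrywise non-negative. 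Since permutation similarity preserves both properties, they hold for the original labeling as well.

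For the TPDM, I would set $\bm{C}:=(\bm{I}-\bm{B})^{-1}\bm{A}\in\mathbb{R}^{p\times p}_+$, which is non-negative as a product of non-negative matrices, so that each $X_i=\bm{c}_i^\top\circ\bm{Z}$ lies in $V^q_+$ with coefficient vector $\bm{c}_i$ equal to the $i$-th row of $\bm{C}$. Invoking the inner-product structure of Section~\ref{sec:bg:RegularVarying}, where the source basis $\bm{Z}$ satisfies $\langle Z_k,Z_l\rangle=\delta_{kl}$ by convention, the entries of the TPDM are $\sigma_{ij}=\langle X_i,X_j\rangle=\bm{c}_i^\top\bm{c}_j=(\bm{C}\bm{C}^\top)_{ij}$. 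Using $\bm{A}\bm{A}^\top=\bm{A}^2$ because $\bm{A}$ is diagonal, this gives $\bm{\Sigma}_{\bm{X}}=\bm{C}\bm{C}^\top=(\bm{I}-\bm{B})^{-1}\bm{A}^2[(\bm{I}-\bm{B})^{-1}]^\top$, exactly as claimed.

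The main obstacle I anticipate is not the linear algebra but justifying that these formal manipulations are legitimate within the regularly varying framework: one must verify that $\bm{X}=\bm{C}\circ\bm{Z}$ indeed defines a vector in ${\rm RV}^p_+(2)$ with each margin in $V^q_+$, so that the TPDM and its inner-product representation are well-defined and the identity $\sigma_{ij}=\bm{c}_i^\top\bm{c}_j$ applies. This relies on the closure of regular variation under transformed-linear combinations established in \cite{cooley2019decompositions} and on the orthonormality convention for the source basis $\bm{Z}$; some care is also needed to confirm that the fixed point obtained after applying $t^{-1}$ is the unique vector consistent with the structural equations, which the nilpotency of $\bm{B}$ guarantees.
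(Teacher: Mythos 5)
Your proposal is correct and follows essentially the same route as the paper's proof: apply $t^{-1}$ to linearize the structural equations, use a topological ordering of the DAG to make $\bm{B}$ strictly lower triangular and hence obtain invertibility and entrywise non-negativity of $(\bm{I}-\bm{B})^{-1}$ via the Neumann series, and then read off the TPDM from the non-negative coefficient matrix $(\bm{I}-\bm{B})^{-1}\bm{A}$. The only cosmetic differences are that you exploit nilpotency ($\bm{B}^{p}=\bm{0}$, finite Neumann sum) where the paper argues via a convergent matrix with spectral radius less than one, and that you compute $\sigma_{ij}=\bm{c}_i^\top\bm{c}_j$ directly from the inner-product definition rather than citing the masking result $\bm{D}^{(0)}(\bm{D}^{(0)})^\top$ of \cite{cooley2019decompositions}, which reduces to the same computation when the coefficient matrix is non-negative.
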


Proposition~\ref{prop:tpdmForOnetails} shows that $\bm{X}$ admits the representation ${\bm{X} = (\bm{I} - \bm{B})^{-1} \bm{A} \circ \bm{Z}}$, where the matrices ${(\bm{I}-\bm{B})}^{-1}$ and $\bm{A}$ are non-negative. Thus, each $X_i$ is a transformed-linear combination of the elements of $\bm{Z}$ with non-negative coefficients. 

\subsection{Equivalence of zero PTCC and d-separation in the XSCM}
\label{sec:extremal_causality:MarkovFaithfulness}

A key advantage of the XSCM over existing causal models for heavy-tailed random variables is that it satisfies extremal analogues of the causal Markov and causal faithfulness conditions; we refer to these as the \textit{tail causal Markov} and \textit{tail causal faithfulness} conditions. These two properties establish an equivalence between partial tail uncorrelation in the random vector $\bm{X}$ and d-separation in its corresponding graph $\mathcal{G}$, which facilitates the estimation of the underlying graph structure.

\begin{proposition}[Tail causal Markov and tail causal faithfulness]\label{prop:Mar}    
    Given an XSCM for $\bm{X}=(X_1,\dots,X_p)^\top$ with underlying graph $\mathcal{G}=(\mathcal{V},\mathcal{E})$, for any two vertices $v_i,v_j \in \mathcal{V}$ and a separation set $\bm{\mathcal{S}}_v\subset \mathcal{V}\setminus \{v_i,v_j\}$, the following holds: $v_i \perp_{\mathcal{G}}  v_j \mid \bm{\mathcal{S}}_v$ if and only if $\gamma_{ij|\bm{\mathcal{S}}} = 0$, where $\bm{\mathcal{S}}:=\{X_k\}_{v_k\in\bm{\mathcal{S}}_v}$, i.e., $X_i\perp_{\rm TC} X_j \mid \bm{\mathcal{S}}$.
\end{proposition}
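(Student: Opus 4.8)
The plan is to reduce the claim to the classical equivalence between d-separation and vanishing partial correlation in a recursive linear model, using the fact that $t^{-1}$ linearizes the transformed-linear operations. First I would apply $t^{-1}$ to each structural equation in \eqref{eq:transformedLinear}; since $t^{-1}(X\oplus Y)=t^{-1}(X)+t^{-1}(Y)$ and $t^{-1}(a\circ X)=a\,t^{-1}(X)$, the XSCM becomes an ordinary linear structural equation model $t^{-1}(\bm{X})=\bm{A}\,t^{-1}(\bm{Z})+\bm{B}\,t^{-1}(\bm{X})$ driven by the independent (``orthonormal'') sources $\bm{Z}$. By Proposition~\ref{prop:tpdmForOnetails} the TPDM $\bm{\Sigma}_{\bm{X}}=(\bm{I}-\bm{B})^{-1}\bm{A}^2[(\bm{I}-\bm{B})^{-1}]^\top$ is then precisely the Gram matrix of the coefficient vectors (the rows of $(\bm{I}-\bm{B})^{-1}\bm{A}$), so it plays exactly the role of the covariance matrix of a linear Gaussian DAG model with weighted adjacency $\bm{B}$ and innovation variances $\alpha_i^2$. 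Because the PTCC in \eqref{eq:PTCC} is, by construction, the partial correlation read off the Schur complement of this matrix, $\gamma_{ij\mid\bm{\mathcal{S}}}$ coincides identically with the partial correlation of that associated Gaussian model, and the proposition becomes the purely algebraic statement that d-separation in $\mathcal{G}$ is equivalent to the vanishing of the corresponding Schur-complement entry $\sigma_{ij\mid\bm{\mathcal{S}}}$.

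For the tail causal Markov direction ($v_i\perp_{\mathcal{G}}v_j\mid\bm{\mathcal{S}}_v\Rightarrow\gamma_{ij\mid\bm{\mathcal{S}}}=0$) I would use the trek representation of the partial covariance. After transformed-linear projection onto $\mathrm{span}\{X_k:v_k\in\bm{\mathcal{S}}_v\}$, the quantity $\sigma_{ij\mid\bm{\mathcal{S}}}$ expands, via Wright's path rules, as a sum over the treks between $v_i$ and $v_j$ that remain active given $\bm{\mathcal{S}}_v$, each term being a product of edge weights and source variances. If $v_i$ and $v_j$ are d-separated given $\bm{\mathcal{S}}_v$, every such trek is blocked, the sum is empty, and $\sigma_{ij\mid\bm{\mathcal{S}}}=0$; equivalently, this is the global Markov property of the associated Gaussian DAG model. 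This implication holds for all admissible positive parameters, so it is the routine half.

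The tail causal faithfulness direction ($\gamma_{ij\mid\bm{\mathcal{S}}}=0\Rightarrow v_i\perp_{\mathcal{G}}v_j\mid\bm{\mathcal{S}}_v$) is where I expect the real difficulty, and where the positivity constraints $\beta_{j\rightarrow i}>0$ and $\alpha_i>0$ must be exploited. Arguing by contraposition, I would assume $v_i$ and $v_j$ are d-connected given $\bm{\mathcal{S}}_v$ and show $\sigma_{ij\mid\bm{\mathcal{S}}}\neq 0$. The trek sum then contains at least one nonzero term, so $\sigma_{ij\mid\bm{\mathcal{S}}}$ is a non-identically-zero rational function of $(\bm{B},\bm{A})$; the crux is to rule out exact cancellations among the active treks. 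Since conditioning on colliders contributes terms of opposite sign, positivity of the weights does not by itself preclude cancellation, so this step requires either a genericity argument (the cancellation locus is a proper algebraic subvariety of the positive parameter region, hence Lebesgue-negligible) or a structural no-cancellation condition guaranteeing faithfulness. Controlling these cancellations, rather than the algebraic reduction or the Markov direction, is the main obstacle of the proof.
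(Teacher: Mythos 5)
Your reduction is the same as the paper's: its Lemma~\ref{lem:XSCM-linearCausalModel} constructs exactly the companion Gaussian linear SCM with the same $\bm{B}$ and $\bm{A}$ that you obtain by applying $t^{-1}$, and it likewise observes that the partial tail-covariance $\sigma_{ij\mid\bm{\mathcal{S}}}$ is given by the same Schur complement of $\bm{\Sigma}_{\bm{X}}=(\bm{I}-\bm{B})^{-1}\bm{A}^2[(\bm{I}-\bm{B})^{-1}]^\top$ as the Gaussian conditional covariance, so the two quantities coincide. Your trek-sum argument for the Markov direction is a correct, more explicit version of what the paper discharges by citing the causal Markov property of Gaussian linear SCMs. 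Up to that point the two arguments are essentially identical.

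The gap is in the faithfulness direction, which you explicitly leave open: you observe that positivity of the $\beta_{j\rightarrow i}$ and $\alpha_i$ does not by itself exclude cancellation once colliders enter the conditioning set, and you propose either a genericity argument or an unspecified no-cancellation condition without carrying either out. The paper closes this step by flatly asserting that Gaussian linear SCMs satisfy causal faithfulness and citing standard references. Your hesitation is substantive rather than cosmetic: the cited faithfulness results hold only for Lebesgue-almost-all parameter values, and positivity does not rescue the partial covariances. For instance, in the XSCM on three vertices with edges $v_1\rightarrow v_2$, $v_1\rightarrow v_3$, $v_2\rightarrow v_3$, unit scales, and $\beta_{1\rightarrow 2}=\beta_{1\rightarrow 3}\,\beta_{2\rightarrow 3}$ (all strictly positive, e.g.\ all equal to one), a direct Schur-complement computation gives $\sigma_{12\mid\{3\}}=0$ even though $v_1$ and $v_2$ are d-connected given $v_3$ through the edge $v_1\rightarrow v_2$. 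So to reach the proposition as stated you must either adopt the paper's blanket appeal to faithfulness, which silently imports a ``for generic parameters'' caveat, or supply the genericity/no-cancellation argument you only sketch. As written, your proposal correctly locates the obstacle but does not overcome it, and that unfinished step is precisely where it falls short of a complete proof.
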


The proof of Proposition~\ref{prop:Mar} is provided in Appendix~\ref{sec:Proof}. The main idea is to demonstrate that for any XSCM $\mathcal{S}$ defined for $\bm{X}=(X_1,\dots,X_p)^\top\in {\rm RV}^p_+(2)$, there exists a corresponding Gaussian linear SCM $\mathcal{S}^\prime$ for a Gaussian $\bm{X}^\prime=(X_1^\prime,\dots,X_p^\prime)^\top$. Then, as $\mathcal{S}$ and $\mathcal{S}^\prime$ share the same path coefficient matrix $\bm{B}$ and scale matrix $\bm{A}$,  they induce the same graph $\mathcal{G}=(\mathcal{V},\mathcal{E})$. 

For any $v_i,v_j\in \mathcal{V}$ and a separation set $\bm{\mathcal{S}}_v\subset \mathcal{V}\setminus \{v_i,v_j\}$, the partial tail-covariance, $\sigma_{ij|\bm{\mathcal{S}}}$, of $\bm{X}$ is equal to the conditional covariance ${\rm Cov}[X_i,X_j\mid\bm{\mathcal{S}}^\prime]$ of $\bm{X}^\prime$, where $\bm{\mathcal{S}}:=\{X_k\}_{v_k\in\bm{\mathcal{S}}_v}$ and $\bm{\mathcal{S}}^\prime:=\{X_k^\prime\}_{v_k\in\bm{\mathcal{S}}_v}$. Note that we are not seeking equivalence between d-separation and conditional independence, but instead equivalence between d-separation, $v_i\perp_\mathcal{G} v_j\mid\bm{\mathcal{S}}_v$, and partial tail uncorrelation, $X_i\perp_{\rm TC} X_j\mid \bm{\mathcal{S}}$.

\subsection{XSCM for time series}\label{sec:extremal_causality:timeSeries}

The proposed XSCM framework can also  be readily adapted to time series data. Here we assume that the XSCM is stationary, has a finite maximum causal lag $\tau>0$, and no backward causality, as discussed in Section~\ref{sec:causal:causal_graph_model}.
Similarly to Definition~\ref{def:model:transformedLinearCausality}, the XSCM for a time series $\{\bm{X}_t: = (X_{1,t}, \dots, X_{p,t})^\top\}_{t\in \mathcal{T}}$, where $\mathcal{T}$ denotes the time index set, defines each variable using the transformed-linear structural equation
\[
X_{i,t} = (\alpha_i \circ Z_{i,t}) \oplus \bigoplus_{X_{j,t-\delta} \in \mathrm{Pa}(X_{i,t}); \delta=0,\dots,\tau} \left( \beta^\delta_{j \rightarrow i} \circ X_{j,t-\delta} \right), \quad i=1,\dots,p,\quad t\in \mathcal{T},
\]
where $\alpha_i>0, \beta^\delta_{j \rightarrow i} > 0$. The model can be compactly written in matrix form as:
\begin{equation}\label{eq:XSCM:timeSeries}
    \bm{X}_t = \bm{B}_{(0)} \circ \bm{X}_{t} \oplus \bm{B}_{(1)} \circ \bm{X}_{t-1} \oplus \dots \oplus \bm{B}_{(\tau)} \circ \bm{X}_{t-\tau} \oplus \bm{A} \circ \bm{Z}_t,
\end{equation}
where path coefficient matrices $\bm B_{(\delta)}=(B_{(\delta);ij})_{i,j=1}^p\in\mathbb{R}_+^{p\times p}$ captures contemporaneous ($\delta=0$) and lagged (${\delta=1,\ldots,\tau}$) causal effects. The entry $B_{(\delta);ij}$ denotes the strength of the lag-$\delta$ causal influence from $X_{j,t-\delta}$ to $X_{i,t}$. The direct form of the XSCM is given by:
\begin{equation}\label{eq:XSCM:timeSeries:direct}
    \bm{X}_t = (\bm{I} - \bm{B}_{(0)})^{-1} \left( \bm{B}_{(1)} \circ \bm{X}_{t-1} \oplus \dots \oplus \bm{B}_{(\tau)} \circ \bm{X}_{t-\tau} \oplus \bm{A} \circ \bm{Z}_t \right).
\end{equation}

Again, the matrix $(\bm{I} - \bm{B}_{(0)})^{-1}\in \mathbb{R}^{p\times p}_+$ is non-singular and non-negative and guarantees that each $X_{i,t}\in V_+^q$ for $i=1,\dots,p,t\in \mathcal{T}$. When the maximum causal time lag, $\tau$, is set to zero,~\eqref{eq:XSCM:timeSeries:direct} simplifies to the XSCM in Definition~\ref{def:model:transformedLinearCausality}.

\subsection{Separation-based graphical learning}
\label{sec:extremal_causality:inference}

Proposition~\ref{prop:Mar} shows that, for XSCMs, testing whether the PTCC equals zero is equivalent to testing for d-separation in the underlying DAG. Thus, we can exploit many of the popular constraint-based methods for learning the underlying graph of the XSCM. We choose the PCMCI$^+$ algorithm~\citep{runge2020discovering}, which improves upon the PC algorithm (named after its inventors,~\citealp{spirtes2000causation}), and can be easily applied to time series data. Note that other constraint-based algorithms, such as those by~\cite{spirtes2001anytime} and~\cite{colombo2012learning}, can also be combined with our proposed separation-based methodology.

The graph learning procedure from the PCMCI$^+$ algorithm consists of two phases: i) \textit{skeleton estimation} and ii) \textit{orientation}. Skeleton estimation identifies the undirected structure of the underlying graph $\mathcal{G}:=(\mathcal{V},\mathcal{E})$ defining an XSCM for the vector ${\bm{X}=(X_1,\dots,X_p)^\top}$ by iteratively removing edges from a fully-connected undirected graph. For any two vertices $v_i,v_j$ in $\mathcal{V}$, the edges $(v_i,v_j)$ and $(v_j,v_i)$ are removed from $\mathcal{E}$ if there exists a set $\bm{\mathcal{S}}_v \subset \mathcal{V}\setminus \{v_i,v_j\}$ such that $v_i \perp_\mathcal{G} v_j \mid\bm{\mathcal{S}}_v$, i.e., $X_i\perp_{\rm TC}X_j \mid \bm{\mathcal{S}}$ with $\bm{\mathcal{S}}=\{X_k\}_{v_k\in \bm{\mathcal{S}}_v}$. The \textit{orientation phase} determines the directions of all remaining edges after the skeleton has been estimated by using the facts that colliders cannot appear in the separation set and that DAGs are acyclic.

Partial tail uncorrelation can be tested via a standard $t$-test as discussed in Section~\ref{sec:bg:RegularVarying} and \cite{kim2022hypothesis}. We implement the separation test by using the Python package \texttt{Tigramite}~\citep{runge2020discovering}. The consistency, up to Markov equivalence, of underlying graph estimation by the PCMCI$^+$ algorithm has been established in Theorems~1--3 of \citet{runge2020discovering} under the four standard assumptions of causal discovery: i) causal sufficiency (no unobserved common causal parents), ii) the causal Markov condition, iii) causal faithfulness, and iv) the availability of a consistent conditional independence test. Proposition~\ref{prop:Mar} shows that XSCMs satisfy extremal analogues of the causal Markov and causal faithfulness conditions. Moreover, the PTCC-based separation test is consistent under Condition~\eqref{eq:condition5.6}.

Therefore, under causal sufficiency assumption and Condition~\eqref{eq:condition5.6}, the PCMCI$^+$ algorithm equipped with PTCC-based separation tests consistently recovers the underlying DAG structure generated by an XSCM, up to Markov equivalence, as the sample size increases.

Hyperparameters for our graph learning procedure include the threshold level $q\in (0,1)$ in estimating the partial tail-covariance $\sigma_{ij|\bm{\mathcal{S}}}$ in \eqref{eq:estimationCondSigma11}, and the significance level $\alpha\in (0,1)$ for testing the hypothesis ${\mathcal{H}_0: X_i \perp_{\rm TC} X_j \mid \bm{\mathcal{S}}}$. In addition, for time series, the parameter $\tau$ determining the maximum allowed causal time lag also needs to be specified. To approximately satisfy the limit in Condition~\eqref{eq:condition5.6}, the threshold level $q$ should be chosen sufficiently close to~1, while ensuring that the number of threshold exceedances $N^\prime$ remains large enough for reliable estimation of the partial tail-covariance. This is the classical trade-off dilemma between bias and variance in extreme value theory. In practice, the values of $\alpha$ and $q$ can be tuned to achieve a desirable level of sparsity in the estimated graph.

\subsection{Connection with undirected graph learning for extremes}\label{sec:extremal_causality:markovNetwork}

Undirected graphs, in which edges encode extremal associations between variables, are also key for the modeling of multivariate extremes. We here show that, for the undirected graphical frameworks introduced by~\cite{gong2024partial} and~\cite{kim2022hypothesis}, the underlying graph structure can also be learned using the proposed separation-based methods described in Section~\ref{sec:extremal_causality:inference}. Specifically, the intermediate phase consisting of estimating the skeleton of the graph provides a valid estimate of the underlying undirected graph when $\bm{X}$ is assumed to follow an extremal Markov network. We now define undirected graphical models for regularly varying random vectors $\bm{X} \in {\rm RV}^p_+(2)$, with $X_i \in V^q_+$ for $i=1,\dots,p$.

\begin{definition}[Extremal Markov network (\citealp{gong2024partial}, Sec.~4.1;~\citealp{kim2022hypothesis})]\label{def:markovNetwork}
Let $\bm{X} = (X_1, \dots, X_p)^\top$ be a random vector with $X_i\in V_+^q$ for each $i=1,\dots,p$. Let $\bm{\Sigma}_{\bm{X}}$ denote the TPDM of $\bm{X}$, and let $\bm{\Sigma}^{-1}_{\bm{X}}$ be its inverse matrix. The random vector $\bm{X}$ is called an \textit{extremal Markov network} with respect to the induced undirected graph $\check{\mathcal{G}} = (\mathcal{V}, \mathcal{E})$, with vertex set $\mathcal{V} = \{v_1, \dots, v_p\}$ and edge set  
\[
\mathcal{E} := \left\{ (v_i, v_j) \in \mathcal{V} \times \mathcal{V} : (\bm{\Sigma}^{-1}_{\bm{X}})_{ij} \neq 0, ~ i \neq j \right\}.
\]
\end{definition}

Extremal Markov networks satisfy the pairwise Markov property~\citep[][Proposition~1]{gong2024partial}, which states that $(v_i,v_j)$ is not an edge in $\check{\mathcal{G}}$ if and only if $\gamma_{ij|\bm{\mathcal{S}}} = 0$, where $\bm{\mathcal{S}} = \{X_1, \dots, X_p\} \setminus \{X_i, X_j\}$. To generalize separation-based estimation to undirected graphs, we need to establish the global Markov property. To this end, we first recall the definition of separation in undirected graphs.

\begin{definition}[Separation in an undirected graph]\label{def:separationMar}
    Let $\check{\mathcal{G}} = (\mathcal{V}, \mathcal{E})$ be an undirected graph. Two vertices $v_i, v_j \in \mathcal{V}$ are said to be separated given a subset $\bm{\mathcal{S}}_v \subset \mathcal{V} \setminus \{v_i, v_j\}$ if every path between $v_i$ and $v_j$ is blocked by $\bm{\mathcal{S}}_v$, in the sense that every such path passes through at least one vertex in $\bm{\mathcal{S}}_v$.
\end{definition}

The concept of separation in undirected graphs is simpler than d-separation in DAGs, as it depends solely on path obstruction and does not involve colliders. We extend the pairwise Markov property by showing that extremal Markov networks also satisfy the \textit{global Markov property}.

\begin{proposition}[Global Markov property for extremal Markov networks]\label{prop:extrGlobalMarkov}
Assume that $\bm{X} = (X_1, \dots, X_p)^\top$ is an extremal Markov network with $X_{i} \in V_+^q$ for ${i=1,\dots,p}$, and let $\check{\mathcal{G}} = (\mathcal{V}, \mathcal{E})$ be its induced undirected graph. For any two vertices $v_i, v_j \in \mathcal{V}$ and any separation set ${\bm{\mathcal{S}}_v \subset \mathcal{V} \setminus \{v_i, v_j\}}$,  $v_i$ and $v_j$ are separated given $\bm{\mathcal{S}}_v$ if and only if $\gamma_{ij|\bm{\mathcal{S}}} = 0$, i.e, $X_i\perp_{\rm TC} X_j \mid \bm{\mathcal{S}}$, where $\bm{\mathcal{S}} = \{X_k : v_k \in \bm{\mathcal{S}}_v\}$.
\end{proposition}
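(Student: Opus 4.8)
The plan is to reduce the claim to the well-understood theory of Gaussian graphical models, in direct analogy with the reduction used in the proof of Proposition~\ref{prop:Mar}. Since the TPDM $\bm{\Sigma}_{\bm{X}}$ is symmetric positive definite (it is the Gram matrix of the coefficient vectors associated with the $X_i\in V^q_+$, and its inverse is assumed to exist in Definition~\ref{def:markovNetwork}), I would introduce an auxiliary Gaussian vector $\bm{X}'=(X_1',\dots,X_p')^\top\sim\mathcal{N}_p(\bm{0},\bm{\Sigma}_{\bm{X}})$ whose covariance matrix is exactly the TPDM. The first key observation is that the PTCC coincides with the Gaussian partial correlation: the partial tail-covariance matrix $\bm{\Sigma}_{ij|\bm{\mathcal{S}}}$ in~\eqref{eq:PTCC} is precisely the Schur complement giving the conditional covariance of $(X_i',X_j')$ given $\bm{\mathcal{S}}'=\{X_k'\}_{v_k\in\bm{\mathcal{S}}_v}$, so that $\gamma_{ij|\bm{\mathcal{S}}}=\mathrm{Corr}(X_i',X_j'\mid\bm{\mathcal{S}}')$ and hence $\gamma_{ij|\bm{\mathcal{S}}}=0$ if and only if $X_i'\perp X_j'\mid\bm{\mathcal{S}}'$. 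The second key observation is that the induced graph $\check{\mathcal{G}}$ of Definition~\ref{def:markovNetwork} is exactly the conditional independence graph of $\bm{X}'$, since both are defined through the support of the off-diagonal entries of $\bm{\Sigma}_{\bm{X}}^{-1}$.

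With these two identifications, the proposition becomes the classical statement that, for the Gaussian vector $\bm{X}'$, graph separation (Definition~\ref{def:separationMar}) in $\check{\mathcal{G}}$ is equivalent to conditional independence. For the forward implication (separation $\Rightarrow$ zero PTCC), I would proceed as follows. The pairwise Markov property established by \citet[][Proposition~1]{gong2024partial} states that $(v_i,v_j)\notin\mathcal{E}$ if and only if $\gamma_{ij|\bm{\mathcal{S}}}=0$ for the full conditioning set $\bm{\mathcal{S}}=\{X_1,\dots,X_p\}\setminus\{X_i,X_j\}$; translated through the first identification, this says that $\bm{X}'$ satisfies the pairwise Markov property with respect to $\check{\mathcal{G}}$. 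Because the Gaussian density is strictly positive, the intersection property holds, and the classical equivalence of the pairwise, local, and global Markov properties for strictly positive densities upgrades this to the global Markov property: separation of $v_i$ and $v_j$ by $\bm{\mathcal{S}}_v$ implies $X_i'\perp X_j'\mid\bm{\mathcal{S}}'$, i.e.\ $\gamma_{ij|\bm{\mathcal{S}}}=0$.

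The reverse implication (zero PTCC $\Rightarrow$ separation) is the faithfulness direction, and this is where I expect the main difficulty. For a \emph{generic} Gaussian distribution this can fail, since algebraic cancellations among the contributions of distinct unblocked paths can produce a vanishing partial correlation between vertices that are not separated. I would resolve this exactly as in the directed case treated in Proposition~\ref{prop:Mar}, by exploiting the structural constraint that the TPDM is entrywise non-negative and arises as a Gram matrix of non-negative coefficient vectors; this positivity is the natural tool for ruling out the sign cancellations responsible for unfaithfulness, in the spirit of the totally positive (MTP$_2$) Gaussian graphical models, which are known to be faithful. Concretely, I would show that whenever $v_i$ and $v_j$ are not separated by $\bm{\mathcal{S}}_v$, the residual construction underlying $(\bm{\Sigma}_{ij|\bm{\mathcal{S}}})_{12}$ aggregates same-signed contributions along the unblocked path, forcing $\gamma_{ij|\bm{\mathcal{S}}}\neq 0$. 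Verifying this sign argument carefully, and confirming that it covers every conditioning set $\bm{\mathcal{S}}_v$ rather than only the full set appearing in the pairwise property, is the crux of the proof; the remaining steps are the routine linear-algebra identifications described above.
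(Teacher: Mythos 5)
Your reduction is essentially the one the paper uses. Its proof introduces the auxiliary Gaussian vector $\bm{X}' = \bm{A}\bm{Z}'$ with $\bm{Z}'\sim\mathcal{N}_p(\bm{0},\bm{I})$, so that $\mathrm{Cov}[\bm{X}',\bm{X}'] = \bm{A}\bm{A}^\top = \bm{\Sigma}_{\bm{X}}$, identifies the partial tail-covariance $\sigma_{ij|\bm{\mathcal{S}}}$ with the Gaussian conditional covariance $\mathrm{Cov}[X_i',X_j'\mid\bm{\mathcal{S}}']$ through the common Schur-complement formula, and observes that $\check{\mathcal{G}}$ is exactly the conditional independence graph of $\bm{X}'$. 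For the direction ``separation $\Rightarrow$ zero PTCC'' you and the paper do the same thing: the paper invokes the equivalence of the pairwise and global Markov properties for Gaussian Markov random fields (Rue and Held, 2005, Theorem~2.4), which is precisely the positive-density/intersection-property upgrade you describe.

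The divergence is in the converse. You correctly flag that ``zero PTCC $\Rightarrow$ separation'' is a faithfulness statement that can fail for generic Gaussian graphical models, and you propose an MTP$_2$-style sign-cancellation argument exploiting the non-negativity of $\bm{A}$; since you leave that argument as an unverified ``crux,'' your proposal is incomplete as written. The paper, by contrast, does not attempt any such argument: it concludes the ``if and only if'' directly from the fact that, for a joint Gaussian vector, zero conditional covariance is equivalent to conditional independence, combined with the Markov-property equivalence. That chain only delivers separation $\Rightarrow$ conditional independence $\Leftrightarrow$ zero PTCC; the reverse implication, from conditional independence back to graph separation, is asserted rather than derived. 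So the step you single out as the hard one is exactly the step the paper's proof passes over. If you want to complete it, be aware that the route you sketch does not apply off the shelf: $X_i\in V^q_+$ only forces $\bm{\Sigma}_{\bm{X}}=\bm{A}\bm{A}^\top$ to be entrywise non-negative, not that $\bm{\Sigma}_{\bm{X}}^{-1}$ is an M-matrix, so the MTP$_2$ faithfulness results do not follow from the stated hypotheses without an additional assumption on the precision matrix (such as the M-matrix condition the paper imposes only in its simulation design).
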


The proof is provided in Appendix~\ref{sec:Proof}.
Proposition~\ref{prop:extrGlobalMarkov} establishes equivalence between separation in the undirected graph of an extremal Markov network and zero PTCC values. The underlying graph of an extremal Markov network can thus be inferred by removing edges $(v_i,v_j)$ if there exists a separation set $\bm{\mathcal{S}}_v \subset \mathcal{V}\setminus \{v_i,v_j\}$, which is the output of the skeleton estimation phase of the PCMCI$^+$ algorithm. Therefore, graph learning for extremal Markov networks can be achieved by running the first phase (skeleton estimation) only of PCMCI$^+$, which is based on performing PTCC tests for all possible separation sets.

\section{Simulation studies}\label{sec:simulation}

In this section, we assess the efficiency of our separation-based method for recovering the underlying graph structure from simulated regularly-varying (i.e., heavy-tailed and tail-dependent) data. We demonstrate broad applicability of our framework across three distinct scenarios: i) estimating the DAG of an XSCM model using cross-sectional (static) data~(Section~\ref{sec:sim:XSCM:non-time});
ii) estimating the undirected graph of an extremal Markov network using cross-sectional data~(Section~\ref{sec:sim:extremalNetwork}); and iii) estimating the DAG of an XSCM model for time series data~(Section~\ref{sec:sim:XSCM:time-series}). Section~\ref{sec:sim:overview} provides an overview of the simulation framework. Besides the scenarios where data are generated from the exact XSCM model class introduced in this paper, we also investigate the performance of the proposed method when the data are generated from other models. Specifically, we consider two classical models for extremes: the max-linear model for DAG estimation and the multivariate H\"usler--Reiss model for undirected graph estimation. The results suggest that our separation-based approach remains stable under model misspecification in these two scenarios. The corresponding results are reported in Appendix~\ref{sec:sim:XSCM:misspecific} and~\ref{sec:sim:extremalNetwork:misspecification}, respectively.

\subsection{Overview}\label{sec:sim:overview}

In all three scenarios, we follow a simulation framework comprising three steps: i) initialization of the graphical model with a controlled sparsity level $\phi\in (0,1)$; ii) data simulation and visualization given an initialized graphical model; and iii) estimation of the underlying graph structure based on the simulated data and evaluation of estimation accuracy. The initialization details for each model are presented in their respective sections. Here, we describe the common settings for data simulation, graph estimation, and accuracy evaluation.

For simulation, given an initialized XSCM, we fix the scale matrix for the source variables $\bm{Z}$ (or $\bm{Z}_t$ for time series) as $\bm{A} := \mathrm{diag}(1,\dots,1)$. The source variables consists of i.i.d.\ components following a Pareto distribution with unit scale and shape parameter 2, simulated as $F^{-1}_{Z}(U)$ with $F_Z(z)=1-z^{-2}, z>1$, and $U\sim \mathrm{Unif}(0,1)$.

To assess the accuracy of edge recovery in the estimated graph structure, we randomly initialize $M = 50$ graphical models and generate $n$ i.i.d.\ samples from each model. We then apply our method to each of the $M$ models to estimate the underlying graph structure. The accuracy of edge recovery is assessed using the normalized edit distance (NED), defined as 
\begin{equation}\label{eq:edit}
    \begin{aligned}
        \mbox{NED}(\mathcal{E},\hat{\mathcal{E}})=\frac{\sum_{e\in \mathcal{E}}\mathds{1}(e\notin \hat{\mathcal{E}})+\sum_{e\in \hat{\mathcal{E}}}\mathds{1}(e\notin \mathcal{E})}{\|\mathcal{E}\|+\|\hat{\mathcal{E}}\|},
    \end{aligned}
\end{equation}
where $\|\mathcal{E}\|$ and $\|\hat{\mathcal{E}}\|$ denote the number of edges in the true and estimated graphs, respectively. We also consider a variant that ignores edge directions—the undirected normalized edit distance (UNED)—defined as
\begin{equation}\label{eq:edit_undirected}
    \begin{aligned}
        \mbox{UNED}(\mathcal{E},\hat{\mathcal{E}})=\mbox{NED}(\mathcal{E}^\prime,\hat{\mathcal{E}}^\prime),
    \end{aligned}
\end{equation}
where $\mathcal{E}^\prime = \{(v_i, v_j) : (v_i, v_j) \in \mathcal{E} \text{ or } (v_j, v_i) \in \mathcal{E}\}$ (with a similar definition for $\hat{\mathcal{E}}^\prime$). Since some directions are unidentifiable via d-separation, we further consider a variant that excludes such edges. This metric, denoted NED$^*$, is defined as
\begin{equation}\label{eq:edit_undirected_star}
    \begin{aligned}
        \mbox{NED}^*(\mathcal{E},\hat{\mathcal{E}})=\mbox{NED}(\mathcal{E}^*,\hat{\mathcal{E}}^*),
    \end{aligned}
\end{equation}
where $\mathcal{E}^*$ and $\hat{\mathcal{E}}^*$ contain only the edges whose directionality is identifiable via d-separation. All metrics—NED, UNED, and NED$^*$—range between 0 and 1, with smaller values indicating more accurate graph recovery.

For the competing methods, there are many candidates for graphical estimation in extremes, including \citet{wan2023graphical}, \citet{tran2024estimating}, \citet{lederer2023extremes}, \cite{krali2025causal}, \cite{engelke2022structure}, \cite{engelke2025extremes}, \cite{bodik2024granger}, \cite{gong2024partial}, and \citet{engelke2021learning}. Our selection of comparison methods is guided by their relevance to the problem settings considered here and the availability of publicly available code to implement the methods.

For the scenario of estimating the DAG of an XSCM using cross-sectional data, apart from two recent papers by \citet{engelke2025extremes} and \citet{krali2025causal}, no other methods are currently available. Since \citet{krali2025causal} does not provide publicly available code, we do not conduct a direct comparison. Instead, we perform a robustness check of DAG estimation when the data are generated from a max-linear model; see Appendix~\ref{sec:sim:XSCM:misspecific} for more details. The approach of \citet{engelke2025extremes} focuses primarily on a pruning algorithm and is therefore not directly comparable to the method proposed in this paper. For the scenario of estimating the undirected graph of an extremal Markov network using cross-sectional data, methods in \cite{engelke2022structure}, \cite{kim2022hypothesis}, and \cite{gong2024partial} are available for comparison. We choose to compare our methods with that proposed by~\cite{gong2024partial}, as it performs best in recovering the true physical graph topology in the Danube river application among the three methods, with the fewest false and missed edges relative to the physical river topology; see Figure~\ref{fig:manyResultsOnDanube} in Section~\ref{sec:app:river}. For the scenario of estimating the DAG of an XSCM model using time series data, we compare our method with that of~\cite{bodik2024granger}.

We optimize the hyperparameters for the method in \cite{gong2024partial}, as no default values are provided, but use the default settings for the method in~\cite{bodik2024causality}. Throughout the remainder of the paper, the parameters in our method are fixed as follows: $\alpha = 0.005$ and $q = 0.99$. Sensitivity analyses regarding the choices of $\alpha$ and $q$ are provided in Appendix~\ref{sec:sensitivity}.

\subsection{Estimating DAGs for XSCMs with cross-sectional data}\label{sec:sim:XSCM:non-time}

To initialize an XSCM for the vector $\bm{X}:=(X_1,\dots,X_p)^\top$, we randomly generate a path coefficient matrix $\bm{B}$ with a predefined sparsity level $\phi\in(0,1)$. For simplicity, we assume that the components in $\bm{X}$ are topologically ordered, so that the corresponding path coefficient matrix $\bm{B}$ is a lower triangular matrix (see Appendix~\ref{sec:Proof}). Specifically, for each $i < j$, the entry $B_{ij}$ is set to zero with probability $1 - \phi$, and otherwise sampled uniformly on the interval $[0,1]$. The associated DAG, $\mathcal{G} = (\mathcal{V}, \mathcal{E})$, is then constructed with vertex set $\mathcal{V} = \{v_1,\dots,v_p\}$ and directed edge set $\mathcal{E} = \{(v_j, v_i): B_{ij} > 0\}$. The left panel of Figure~\ref{fig:XSCM:illustration} shows an example of a randomly initialized $\bm{B}$, while the right panel illustrates the corresponding DAG.

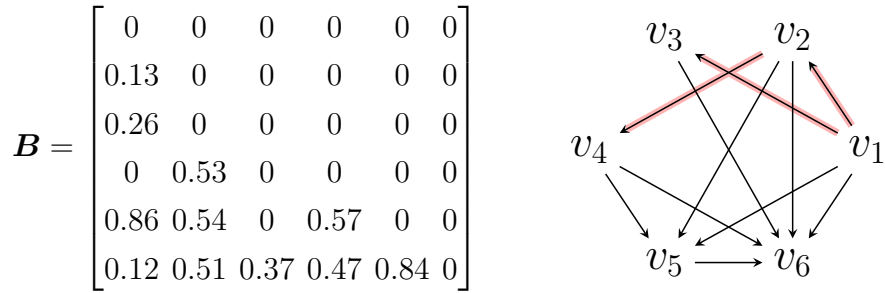
\begin{figure}
    \centering
    \begin{minipage}{0.48\textwidth}
        \vspace{-0.5cm}
        \centering
\[
\renewcommand{\arraystretch}{0.7} 
\bm{B} =
\left[
\begin{array}{@{}c@{\hspace{0.4em}}c@{\hspace{0.4em}}c@{\hspace{0.4em}}c@{\hspace{0.4em}}c@{\hspace{0.4em}}c@{}}
0    & 0    & 0    & 0    & 0    & 0    \\
0.13 & 0    & 0    & 0    & 0    & 0    \\
0.26 & 0    & 0    & 0    & 0    & 0    \\
0    & 0.53 & 0    & 0    & 0    & 0    \\
0.86 & 0.54 & 0    & 0.57 & 0    & 0    \\
0.12 & 0.51 & 0.37 & 0.47 & 0.84 & 0    
\end{array}
\right]
\]  
    \end{minipage}
    \hspace{-1cm}
    \begin{minipage}{0.4\textwidth}
        \centering
        \begin{tikzpicture}
            \node (v1) at (3.7,1.5)  {\Large $v_1$};
            \node (v2) at (2.7,3)  {\Large $v_2$};
            \node (v3) at (1,3)    {\Large $v_3$};
            \node (v4) at (0,1.5)    {\Large $v_4$};
            \node (v5) at (1,0)   {\Large $v_5$};
            \node (v6) at (2.7,0) {\Large $v_6$};   
            \draw[-, line width=1.0mm, draw=red!30, shorten >=3pt] (v2) -- (v4); 
            \draw[-, line width=1.0mm, draw=red!30, shorten >=3pt] (v1) -- (v3); 
            \draw[-, line width=1.0mm, draw=red!30, shorten >=3pt] (v1) -- (v2); 
            \draw[->, line width=0.2mm, >=stealth] (v1) -- (v2); 
            \draw[->, line width=0.2mm, >=stealth] (v1) -- (v3); 
            \draw[->, line width=0.2mm, >=stealth] (v1) -- (v5); 
            \draw[->, line width=0.2mm, >=stealth] (v1) -- (v6); 
            \draw[->, line width=0.2mm, >=stealth] (v2) -- (v4); 
            \draw[->, line width=0.2mm, >=stealth] (v2) -- (v5); 
            \draw[->, line width=0.2mm, >=stealth] (v2) -- (v6); 
            \draw[->, line width=0.2mm, >=stealth] (v3) -- (v6); 
            \draw[->, line width=0.2mm, >=stealth] (v4) -- (v5); 
            \draw[->, line width=0.2mm, >=stealth] (v4) -- (v6); 
            \draw[->, line width=0.2mm, >=stealth] (v5) -- (v6); 
        \end{tikzpicture}
    \end{minipage} 
    \caption{A randomly initialized path coefficient matrix $\bm{B}$ for an XSCM that satisfies topological ordering (left) and its corresponding DAG structure (right). Edges whose directionality cannot be determined by d-separation are highlighted in red.}
\label{fig:XSCM:illustration}
\end{figure}

Given the initialized XSCM, independent samples are generated using the direct form ${\bm{X} = (\bm{I} - \bm{B})^{-1} \circ \bm{Z}}$ where the source variables, $\bm{Z}\in {\rm RV}^p_+$, are simulated as described in Section~\ref{sec:sim:overview}; illustrative scatter plots of samples generated from the XSCM are shown in Figure~\ref{fig:XSCM:scatter} of Appendix~\ref{sec:appendix:simulation:DAGnontime}.


To examine the effect of the number of variables $p$, the connectivity level $\phi$, and the sample size $n$, on graph recovery, we test our method under various $(p, \phi, n)$ combinations. The expected number of edges, $p(p-1)/2 \times \phi$, ranges from 5 to 70. The results are shown in Figure~\ref{fig:DAG_evaluation}, with the left panel showing the method's performance with a small sample size ($n = 5000$) and the right with a large sample size (${n = 50000}$). When ignoring edge directions (UNED), our method performs consistently well, indicating reliable skeleton recovery. When directions are considered (NED and NED$^*$), the error increases. Notably, NED$^*$ improves with larger sample sizes, whereas NED does not. This is consistent with the discussion in Section~\ref{sec:causal:causal_graph_model}, where we highlight that different Markov equivalent DAGs can share the same skeleton and separation sets, leading to ambiguity in direction recovery.
\begin{figure}[t]
    \centering
    \begin{subfigure}{0.475\textwidth}
        \centering
        \includegraphics[width=\textwidth]{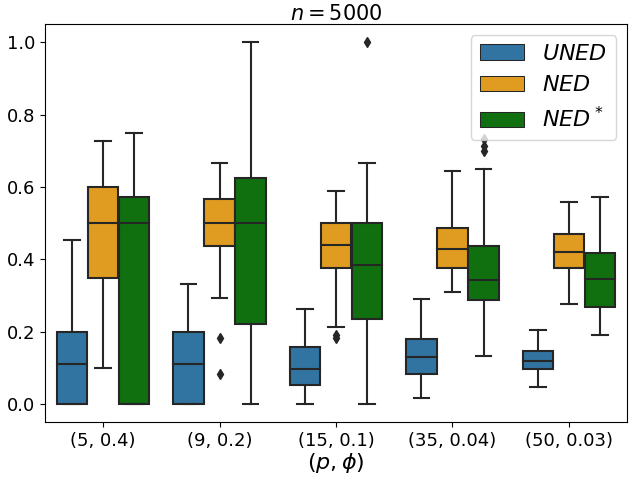}
    \end{subfigure}
    \hspace{0.5cm}
    \begin{subfigure}{0.475\textwidth}
        \centering
        \includegraphics[width=\textwidth]{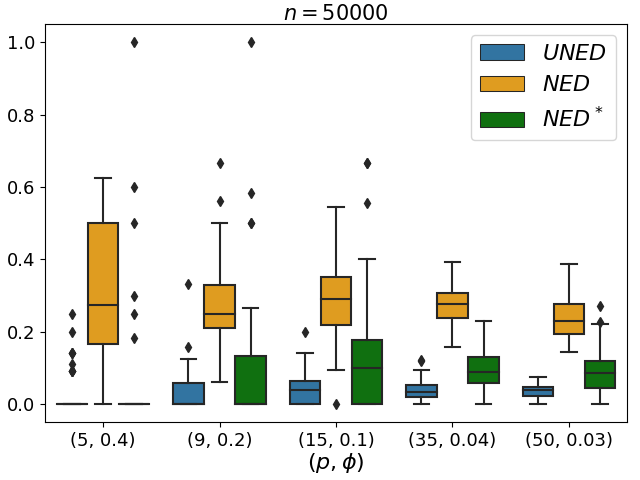} 
    \end{subfigure}
    \caption{Box plots of error measured by NED, UNED, and NED$^*$ for sample size $n=5000$ (left) and $n=50000$ (right)  under varying scale and connectivity parameters $(p, \phi)$. Each setting is repeated for $M = 50$ randomly initialized XSCMs. }
    \label{fig:DAG_evaluation}
\end{figure}

\subsection{Estimating the undirected graph of extremal Markov networks}\label{sec:sim:extremalNetwork}

To initialize an extremal Markov network for $\bm{X} = (X_1, \dots, X_p)^\top$, we first generate a precision matrix $\bm{Q} = \bm{\Sigma}_{\bm{X}}^{-1}$ with varying levels of sparsity $\phi\in(0,1)$, where $\bm{\Sigma}_{\bm{X}}$ denotes the TPDM of $\bm{X}$. To ensure that $\bm{\Sigma}_{\bm{X}}$ is symmetric, positive-definite and non-negative, we require $\bm{Q}=(Q_{ij})_{i,j=1}^p$ to be a symmetric, non-singular $\mathcal{M}$-matrix (see Appendix~\ref{sec:Proof}). 
Specifically, we set each entry ${Q_{ij} = Q_{ji}}$, with $i \neq j$, to zero with probability $\phi$, and otherwise draw it from a uniform distribution over $[-1, 0]$. We then set each diagonal entry to be strictly greater than the maximum absolute value of its corresponding row and column entries. The associated undirected graph $\check{\mathcal{G}} = (\mathcal{V}, \mathcal{E})$ is then defined with vertex set $\mathcal{V} = \{v_1, \dots, v_p\}$ and edge set ${\mathcal{E} = \{(v_i, v_j): Q_{ij} < 0\}}$.

To generate samples from the extremal Markov network $\bm{X} = (X_1, \dots, X_p)^\top$, we use the transformation $\bm{X} = \bm{L} \circ \bm{Z}$, where $\bm{L}$ is the lower triangular matrix from the Cholesky decomposition of $\bm{\Sigma}_{\bm{X}}=\bm{LL}^{\top}$ and  $\bm{Z}\in {\rm RV}^p_+$ is simulated as described in Section~\ref{sec:sim:overview}. An illustration of an extremal Markov network and the corresponding simulated samples are shown in Appendix~\ref{sec:appendix:simulation:emn}. 


Results under varying combinations of scale and connectivity parameters $(p, \phi)$, and two different sample sizes, are summarized in Figure~\ref{fig:extremalmarkov:performance}. The results indicate that our method generally outperforms that proposed by~\cite{gong2024partial}, particularly under high dimensionality and small sample size. The reason why our method generally performs better is that the approach of~\cite{gong2024partial} relies on the pairwise Markov property, which requires conditioning on all other variables as the separation set. In contrast, our method only needs to identify an often much smaller separation set. Conditioning on an unnecessarily large set can introduce noise, making the test less data-efficient and reducing practical performance.

\begin{figure}
    \centering
    \begin{subfigure}{0.475\textwidth}
        \centering
        \includegraphics[width=\textwidth]{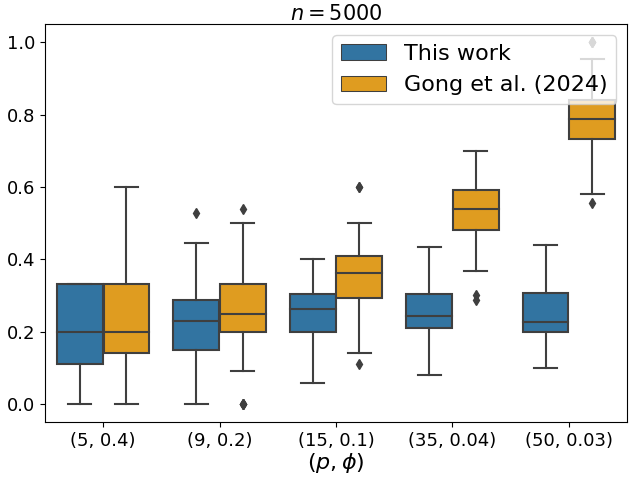}
    \end{subfigure}
    \hspace{0.5cm}
    \begin{subfigure}{0.475\textwidth}
        \includegraphics[width=\textwidth]{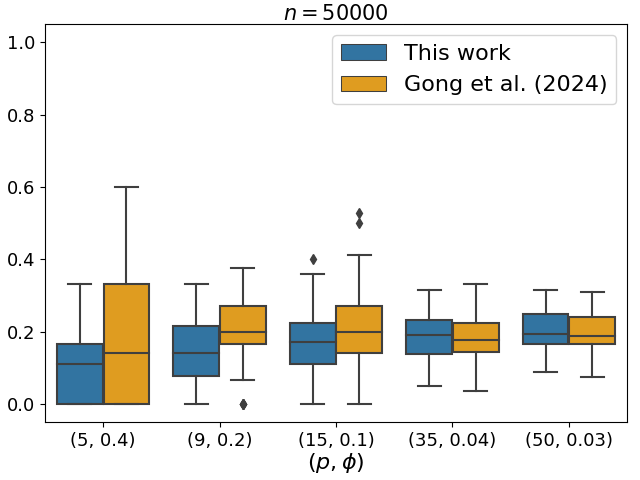}
    \end{subfigure}
     \caption{Box plots of $\mbox{UNED}(\mathcal{E}, \hat{\mathcal{E}})$ for $n = 5000$ (left) and $n = 50000$ (right) samples under different combinations of scale and connectivity parameters $(p, \phi)$, compared with the method from~\cite{gong2024partial}. Each experiment repeated for $M = 50$ randomly initialized extremal Markov networks. The parameters for~\cite{gong2024partial} are optimally tuned via grid search; our method uses fixed parameters as described in Section~\ref{sec:sim:overview}.}
    \label{fig:extremalmarkov:performance}
\end{figure}

\subsection{Estimating DAGs for XSCMs from time series data}\label{sec:sim:XSCM:time-series}

Initialization of the XSCM for time series follows the same procedure as with the cross-sectional case. To ensure acyclicity within each time slice, $\bm{B}_{(0)}$ is constrained to be lower triangular, while the lagged matrices $\bm{B}_{(1)}, \dots, \bm{B}_{(\tau)}$ have no such constraints. We control the overall sparsity level using a connectivity parameter $\phi\in(0,1)$, and draw non-zero entries uniformly on $[0, 1]$.
The corresponding graph $\mathcal{G} = (\mathcal{V}, \mathcal{E})$ is constructed with $\mathcal{V} = \{v_{1,t}, \dots, v_{p,t}\}_{t=1}^T$ and ${\mathcal{E} = \{(v_{j,t-\delta}, v_{i,t}) : (B_{(\delta);ij} > 0\}}$. Then, the time series $\{\bm{X}_t\}_{t=1}^T$ is generated iteratively using \eqref{eq:XSCM:timeSeries:direct}. See Appendix~\ref{sec:appendix:simulation:DAGtime}, for an example.

We consider a maximum lag $\tau = 1$ and simulate $T = 5000$ time points for $M=50$ different XSCMs time series. As the method proposed by~\cite{bodik2024granger} cannot capture contemporaneous causal effects, we conduct two sets of experiments. In the first, we impose $\bm{B}_{(0)} \equiv \bm{0}$ (the zero matrix) to remove contemporaneous effects. In the second, $\bm{B}_{(0)}$ is allowed to contain non-zero entries. We apply both our method and the method of~\cite{bodik2024granger} to estimate the underlying DAGs, evaluating performance using the normalized edit distance $\mbox{NED}(\mathcal{E}, \hat{\mathcal{E}})$ defined in Equation~\eqref{eq:edit}, while ignoring edges associated with contemporaneous effects.

Results under varying combinations of scale and connectivity parameters $(p, \phi)$ are summarized in Figure~\ref{fig:oneTail}. The left panel shows the method's performance under $\bm{B}_{(0)} \equiv \bm{0}$, and the right panel shows the performance when contemporaneous effects are accounted for. When $\bm{B}_{(0)}$ is zero, our method performs better than~\cite{bodik2024granger} but with a small margin. However, once contemporaneous effects are introduced, the performance of the method in~\cite{bodik2024granger} deteriorates significantly, while our method remains robust. The performance gap increases with the number of variables, $p$.

\begin{figure}
    \centering
    \begin{subfigure}{0.475\textwidth}
        \centering
        \includegraphics[width=\textwidth]{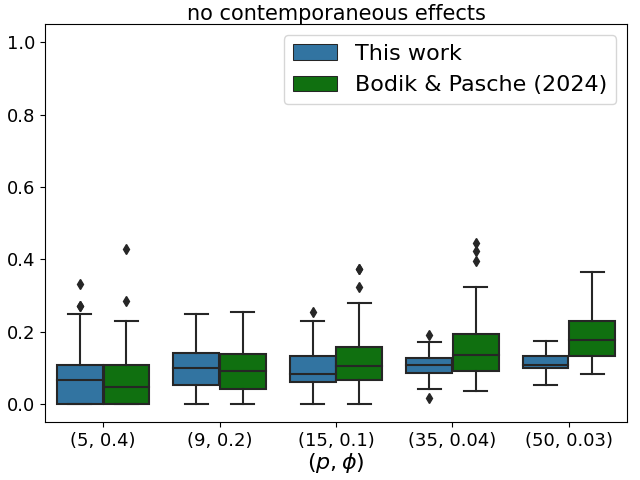}
    \end{subfigure}
    \hspace{0.5cm}
    \begin{subfigure}{0.475\textwidth}
        \includegraphics[width=\textwidth]{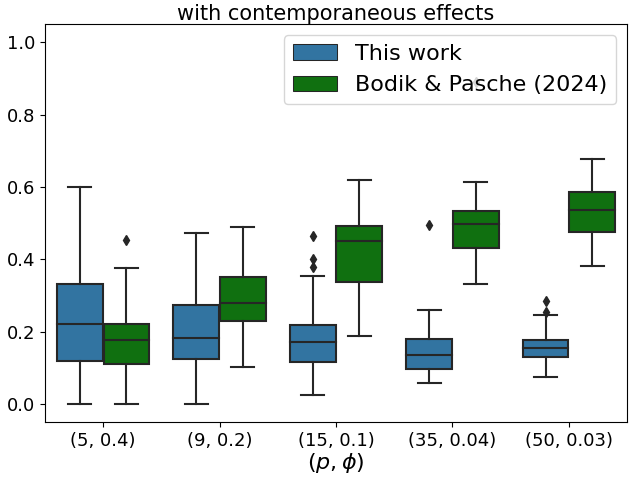} 
    \end{subfigure}

    \caption{Box plots of error measured by $\mbox{NED}$ (contemporaneous edges ignored) under different combinations of scale and sparsity parameters $(p, \phi)$
    with (right) and without ($\bm{B}_{(0)} \equiv \bm{0}$, left) contemporaneous effects. Each experiment is repeated for $M = 50$ randomly initialized XSCM time series of length $T = 5000$.}
    \label{fig:oneTail}
\end{figure}

\section{Real data applications}
\label{sec:app}

This section presents two applications of our proposed method. Section~\ref{sec:app:river} demonstrates the estimation of the underlying DAG structure in the Danube river basin dataset, a well-known benchmark application in hydrology and widely used in the literature on graphical modeling for extremes~\citep[see, e.g.,][]{asadi2015extremes,engelke2020graphical,kim2022hypothesis,gong2024partial,bodik2024granger}. Section~\ref{sec:app:china} investigates causal discovery in trading activities of China's future market, based on a high-frequency financial dataset constructed by~\cite{jiang2024efficient}.

\subsection{Danube river basin data}
\label{sec:app:river}

The Danube river basin dataset contains daily discharge data from 31 gauging stations along the Danube river basin spanning the period from 1960 to 2009. We use the preprocessed version provided by~\cite{asadi2015extremes}, which includes only summer months (June, July, August), resulting in a total of \NumObsRiver\ observations. The known river topology is shown in the upper panel of Figure~\ref{fig:manyResultsOnDanube}, where edge directions indicate flow from upstream to downstream.

\begin{figure}
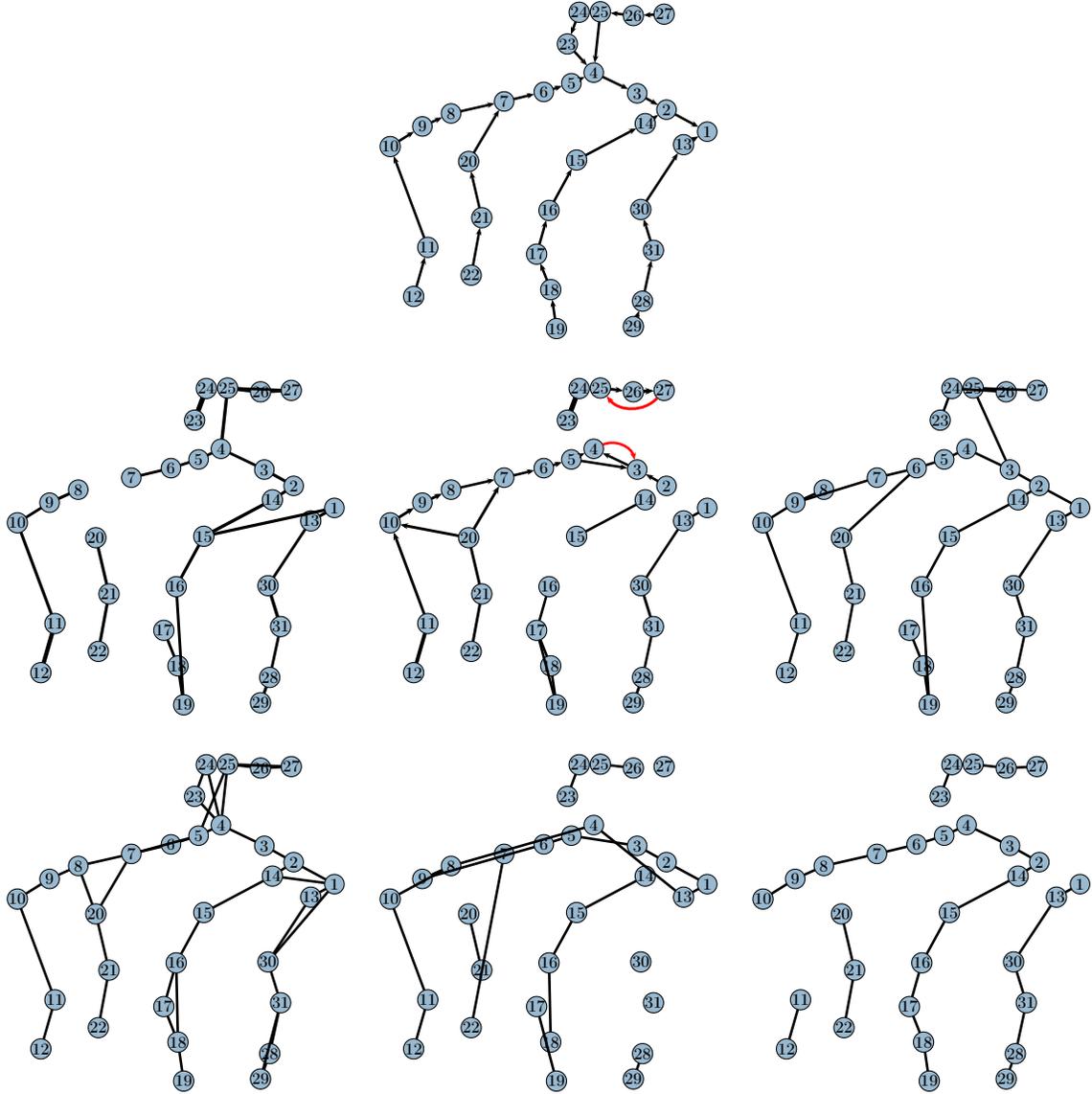

    \centering
    \begin{subfigure}[b]{0.31\textwidth}
        \begin{tikzpicture}[scale=0.22,transform shape,every node/.style={circle, draw, outer sep=0pt,minimum size=35pt,fill=danubeBlue!50, inner sep=0pt,font=\bfseries\fontsize{30}{35}\selectfont
}]
            \input{pic/ground_truth.tex}
                \end{tikzpicture}
    \end{subfigure}

    \vspace{1.2em}
    
    \begin{subfigure}[b]{0.31\textwidth}
        \begin{tikzpicture}[scale=0.22,transform shape,every node/.style={circle, draw, outer sep=0pt,minimum size=35pt,fill=danubeBlue!50, inner sep=0pt,font=\bfseries\fontsize{30}{35}\selectfont
}]
            \input{pic/oursT0.tex}
                \end{tikzpicture}
    \end{subfigure}
    \begin{subfigure}[b]{0.31\textwidth}
        \begin{tikzpicture}[scale=0.22,transform shape,every node/.style={circle, draw, outer sep=0pt,minimum size=35pt,fill=danubeBlue!50, inner sep=0pt,font=\bfseries\fontsize{30}{35}\selectfont
}]
            \input{pic/oursT1.tex}
                \end{tikzpicture}
    \end{subfigure}
    \begin{subfigure}[b]{0.31\textwidth}
        \begin{tikzpicture}[scale=0.22,transform shape,every node/.style={circle, draw, outer sep=0pt,minimum size=35pt,fill=danubeBlue!50, inner sep=0pt,font=\bfseries\fontsize{30}{35}\selectfont
}]
        \input{pic/enge.tex}
        \end{tikzpicture}
    \end{subfigure}

    \vspace{1.2em}
    \begin{subfigure}[b]{0.31\textwidth}
        \begin{tikzpicture}[scale=0.22,transform shape,every node/.style={circle, draw, outer sep=0pt,minimum size=35pt,fill=danubeBlue!50, inner sep=0pt,font=\bfseries\fontsize{30}{35}\selectfont
}]
        \input{pic/engeGraphBased.tex}
        \end{tikzpicture}
    \end{subfigure}
    \begin{subfigure}[b]{0.31\textwidth}
        \begin{tikzpicture}[scale=0.22,transform shape,every node/.style={circle, draw, outer sep=0pt,minimum size=35pt,fill=danubeBlue!50, inner sep=0pt,font=\bfseries\fontsize{30}{35}\selectfont
}]
            \input{pic/lee.tex}
            \end{tikzpicture}
    \end{subfigure}
    \begin{subfigure}[b]{0.31\textwidth}
        \begin{tikzpicture}[scale=0.22,transform shape,every node/.style={circle, draw, outer sep=0pt,minimum size=35pt,fill=danubeBlue!50, inner sep=0pt,font=\bfseries\fontsize{30}{35}\selectfont
}]
        \input{pic/gong.tex}
        \end{tikzpicture}
    \end{subfigure}  
    \caption{Known river topology (top); estimation by our method ignoring lagged causality ($\tau = 0$, middle-left) with edge thickness proportional to PTCC scores; estimation by our method incorporating one-step lagged causality ($\tau = 1$, middle-middle) with edge thickness proportional to PTCC scores and lagged causal effects in red; estimation by~\cite{engelke2020graphical} when assuming the underlying structure to be a tree (middle-right); estimation by~\cite{engelke2020graphical} when assuming the underlying structure to be a undirected graph (bottom-left); estimation by~\cite{kim2022hypothesis} (bottom-middle) and~\cite{gong2024partial} (bottom-right).}
    \label{fig:manyResultsOnDanube}
\end{figure}

To satisfy the regular variation condition in Definition~\ref{def:rv}, we first apply an empirical rank transformation to each margin as described in Section~\ref{sec:bg:RegularVarying}, resulting in unit Pareto margins with shape parameter 2. We then apply our method to estimate the underlying DAG. The two free hyperparameters are set to $\alpha = 0.005,q = 0.99$. The estimated graphs for maximum causal lags $\tau = 0$ and $\tau = 1$ are shown in the middle-left and middle-middle panels of Figure~\ref{fig:manyResultsOnDanube}, respectively. In those two graphs, edge thickness is proportional to the PTCC scores. Directed edges indicate identifiable causal directions; undirected edges represent dependencies with unidentifiable directions. Red edges denote one-step lagged causal effects. A sensitivity analysis with respect to the hyperparameters $\alpha$ and $q$ is provided in Appendix~\ref{sec:danubeSensitivity}, where we repeat the experiment over a wide range of values for $\alpha$ and $q$. The results remain consistent across different settings, showing that the underlying skeleton can be largely recovered. These observations confirm that our method is robust to the choice of hyperparameters.

When lagged effects are excluded ($\tau = 0$, middle-left), our method successfully recovers most of the skeleton of the river structure, although many contemporaneous directions remain unidentified. When incorporating one-step lagged causality ($\tau = 1$, middle-middle), directionality estimation improves. For example, a collider structure is detected at station~7. Additionally, one-step lagged causal connections are identified between stations 3 and 4, and between stations 25 and 27—both of which are consistent with the known river topology.

To facilitate comparison, we also present results from several existing methods applied to the same dataset. In~\cite{engelke2020graphical}, results assuming that the underlying structure is a tree or an undirected graph are shown in the middle-right and bottom-left panels of Figure~\ref{fig:manyResultsOnDanube}, respectively. The former achieves better performance than the latter, as it is a stronger assumption than assuming that the underlying structure is a general undirected graph. The results by~\cite{kim2022hypothesis} and~\cite{gong2024partial} are shown in the bottom-middle and bottom-right panels, respectively. Overall, our method offers greater expressiveness and interpretability by learning the DAG structure, and it is more general, being applicable to time series data.

\subsection{Tail causality in China's future market}
\label{sec:app:china}

We now also apply our method to high-frequency trading data from China's future market, which constitutes a subset of the time series dataset constructed by~\cite{jiang2024efficient}. Our analysis focuses on tail causal discovery in trading activities during the period from \TrainStartDate\ to \TrainEndDate. The dataset covers \NumberOfProduct\ products across various categories, from traditional commodities such as grains and coal to financial instruments like equity index and interest rate futures, as listed in Table~\ref{tab:productCategories} with their categories. For details on the specific assets represented by each product code, please refer to the~\cite{jiang2024efficient}, Section~B of the Supplementary Material.  Trading activity is measured as the aggregated trading volume over 5-minute intervals, normalized by the total daily volume, yielding \NumObsChina\ observations.

We apply the marginal transformation described in Section~\ref{sec:app:river} to satisfy the regular variation condition. Our method is then applied fixing the hyperparameters as $\alpha = 0.005$, $q = 0.99,$ and maximum causal lag as $\tau = 1$. The resulting causal graph is shown in Figure~\ref{fig:clusteringBasedWhole}, where the edge widths are proportional to the PTCC scores. To better visualize the structure, we cluster the assets based on the PTCC scores for contemporaneous relationships using the $k$-means algorithm. The number of clusters is set to $k=18$, corresponding to the number of categories in Table~\ref{tab:productCategories}.

\begin{table}
    \centering
    \caption{Futures categories and corresponding asset codes.}
    \vspace{0.5em}
    \resizebox{0.94\textwidth}{!}{
    \begin{tabular}{@{}ll|ll@{}}
    \toprule
    \textbf{Category} & \textbf{Codes}            & \textbf{Category} & \textbf{Codes}        \\
    \midrule
    Oil crops        & a, m, OI, p, b, RM, y     & Precious metals   & ag, au                \\
    Nonferrous metals & al, bc, cu, ni, pb, sn, zn, ao & Economic crops & AP, CF, CJ, CY, PK, SR \\
    Rubber \& woods  & br, fb, nr, ru, sp        & Oil \& gas        & bu, fu, lu, pg, sc     \\
    Grains           & c, cs                     & Olefins           & eb, l, pp, v           \\
    Alcohols         & eg, MA                    & Inorganics        & FG, SA, UR, SH         \\
    Ferrous metals   & hc, i, rb, SF, SM, ss     & Equity index      & IC, IF, IH, IM         \\
    Coals            & j, jm                     & Animals           & jd, lh                 \\
    Novel materials  & lc, si                    & Aromatics         & PF, TA, PX             \\
    Interest rates   & T, TF, TL, TS             & Indices           & ec                     \\
    \bottomrule
    \end{tabular}
    }
    \label{tab:productCategories}
\end{table}

\begin{figure}
    \centering
    \includegraphics[width=1\linewidth]{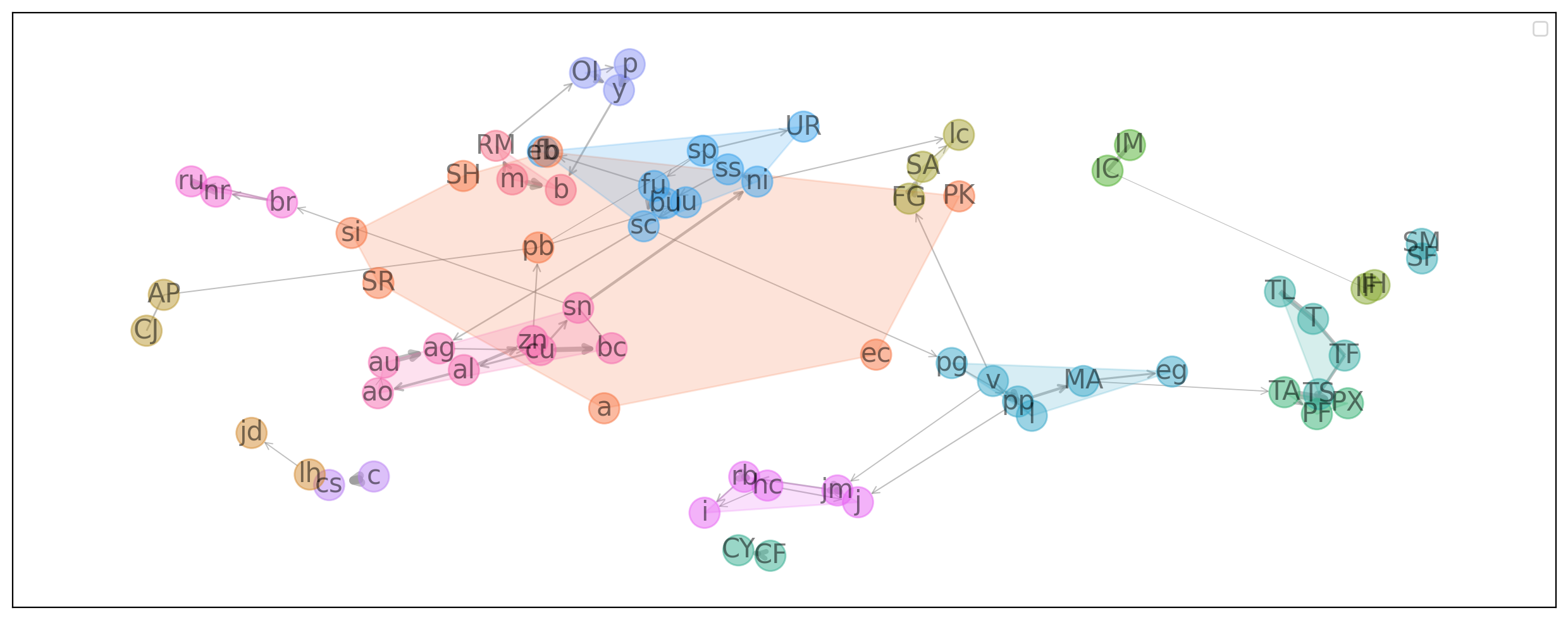}
    \caption{Estimated causal structure of market-wide trading activity in China's futures market (2024--2025). Edge width is proportional to PTCC scores. Clustering is based on the contemporaneous adjacency matrix using the $k$-means algorithm with $k=18$ clusters.}
    \label{fig:clusteringBasedWhole}
\end{figure}

The results indicate that large trading activity tends to be more strongly associated with large trading activity within the same product category. For example, the national bond futures---T, TL, TS, and TF---exhibit strong contemporaneous connections. Similar patterns are observed in agricultural products (OI, p, y, RM, m, b) and chemical products (TA, PF, PX). Furthermore, thanks to the collider structure identified by our method, partial directionality can be inferred even in contemporaneous relationships. Examples include lh$\rightarrow$jd, au$\rightarrow$ag, among others. We find that the direction of edges tend to originate from assets with higher trading activity. Specifically, by comparing the daily average trading turnover (measured in USD) for each pair of directed edges, we observe that in $36$ out of the $56$ pairs, the asset identified as the cause has a larger trading turnover than the effect asset.

\section{Conclusion}
\label{sec:conclusion}

We have introduced a new structural causal model—the XSCM—to capture causal direction in tail-dependent (and potentially heavy-tailed) data. Our main theoretical contribution is the establishment of an equivalence between zero PTCC and graph separation, applicable to both XSCMs and extremal Markov networks. This equivalence allows the problem of learning both DAG structures (in XSCMs) and undirected structures (in extremal Markov networks) to be reformulated as separation detection in graphs, making the use of popular constraint-based learning algorithms applicable.

To evaluate the efficiency of our method, we conducted simulation studies in three distinct settings: inferring DAGs from cross-sectional data, inferring DAGs from time series data, and estimating undirected graph structures. In all cases, our method achieves strong performance compared to existing approaches.
We further demonstrated the practical value of our method through two real-world applications. For the Danube river basin dataset, our method successfully recovers the known river topology and identifies collider structures and lagged causal relationships. In the high-frequency trading data application from China's futures market, the learned graph structure aligns well with product categories, capturing both within-category dependence and partial directionality in extremal trading activity.

In parallel work, the recent preprint of \citet{engelke2025extremes} derived an extremal SCM by analyzing the limiting tail behavior of classical SCMs. Although both works independently tackle causality in extremes through SCMs, the two proposed frameworks remain distinct, and it would be interesting in future research to study their respective strengths and limitations. One of the key differences is that our approach directly constructs a new causal framework tailored to extremes, rather than deriving it from asymptotic arguments, and thus it provides a foundation for structure learning in this setting. Another possible direction for future research is to study the causal mechanisms of extreme events that occur in both directions of a random variable. Modeling such mechanisms—where extremes may arise in both the upper and lower tails—is receiving increasing attention in the literature; see, for example,~\cite{gnecco2021causal} and \cite{bodik2024granger}, where the variable of interest is transformed by taking its absolute value so that existing methods can be applied. However, given the growing acknowledgment of the differing behaviors in the joint upper and lower tail regions, as well as evidence of complex cross-directional dependence \citep{jiang2024efficient}, explicitly modeling the interaction between the upper and lower tails is increasingly necessary. This challenge calls for the development of new structural causal models that account for the inherent constraint between the two tails of the same variable, and may open up a promising line of research.
\section*{Competing interests}
No competing interest is declared.

\section*{Data availability}
The authors confirm that the data supporting the findings of this study are available within the supplementary materials.
\if1\anon
\else
\section*{Acknowledgments}
The author gratefully acknowledge valuable discussions with Dr. Nabila Bounceur on state-of-the-art causal discovery methods.
\fi

{\renewcommand{\baselinestretch}{0.80}\normalsize
\bibliographystyle{apalike}
\bibliography{reference}
}

\newpage

\begin{appendices}

\section{More on simulation}\label{sec:appendix:simulation}

\subsection{Estimating DAGs for XSCMs with cross-sectional data}
\label{sec:appendix:simulation:DAGnontime}

Figure~\ref{fig:XSCM:scatter} shows the pairwise scatter plots for $n = 5000$ samples generated from the XSCM shown in Figure~\ref{fig:XSCM:illustration}.
\begin{figure}[H]
    \centering
    \includegraphics[width=0.95\textwidth]{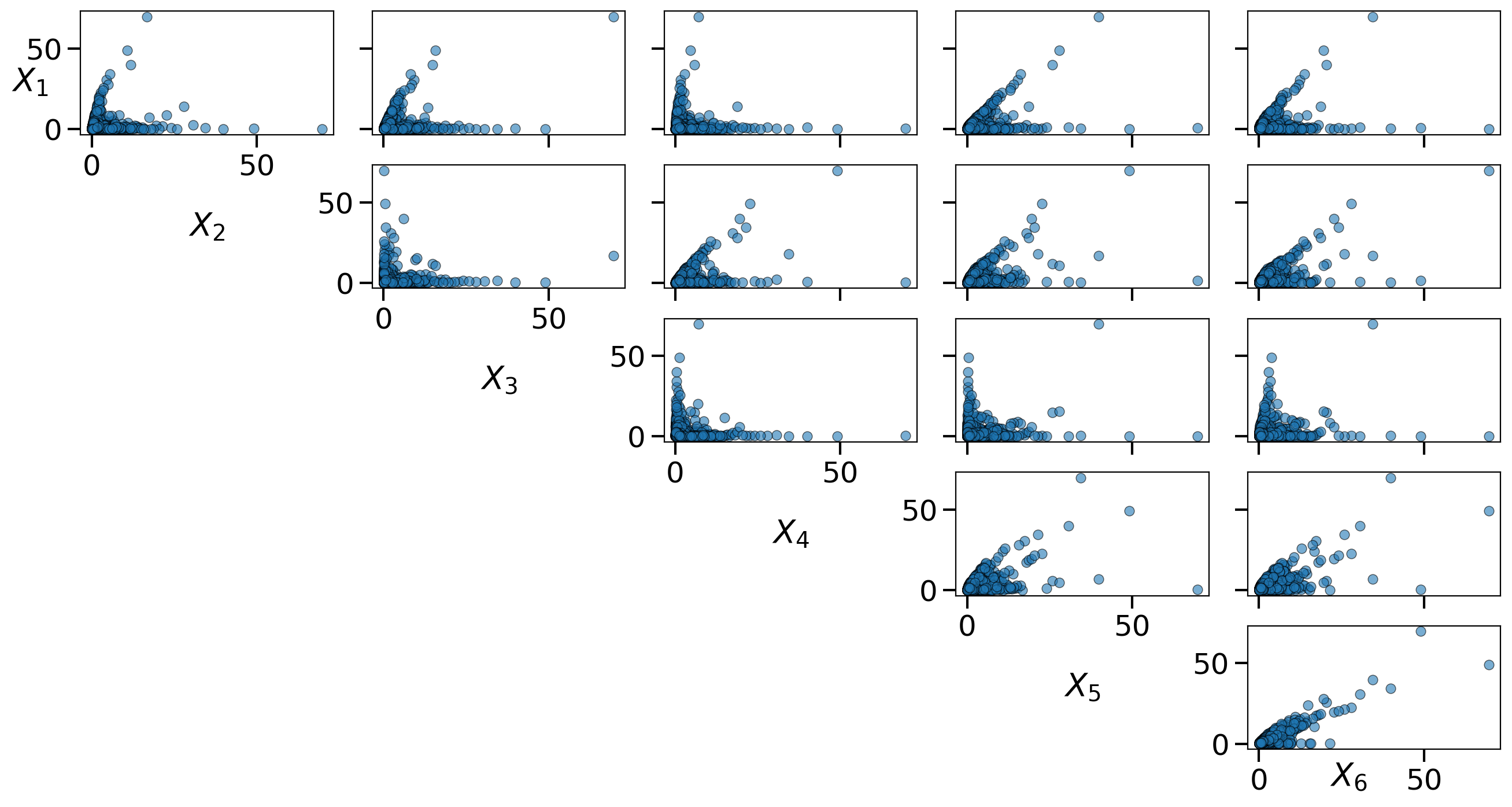}
    \caption{Pairwise scatter plots of the six components of $(X_1,\dots,X_6)^\top$ generated from the example XSCM in Figure~\ref{fig:XSCM:illustration}. All variables are marginally transformed to have a Pareto distribution with unit scale and shape parameter 2.} \label{fig:XSCM:scatter} 
    \end{figure}

\subsection{Estimating the undirected graph of extremal Markov network}
\label{sec:appendix:simulation:emn}

Figure~\ref{fig:extremalmarkov:illustration} shows an example of a randomly initialized precision matrix $\bm{Q}$ (left panel) and its corresponding undirected graph structure (right panel). Figure~\ref{fig:extremalmarkov:scatter} presents the pairwise scatter plots of $n = 5000$ samples generated from the extremal Markov network in Figure~\ref{fig:extremalmarkov:illustration}.

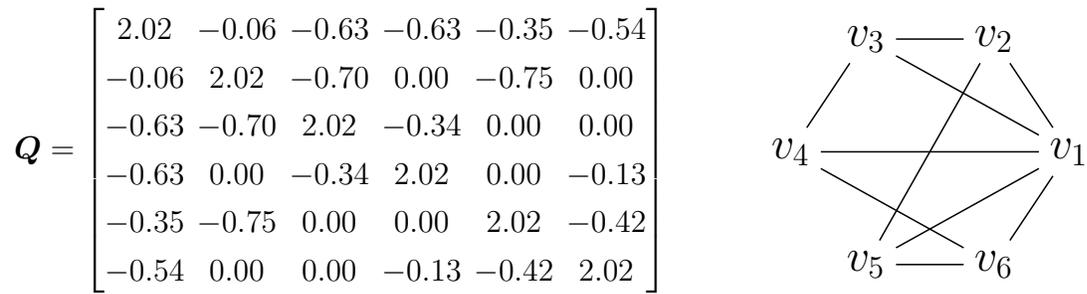
\begin{figure}[H]
    \centering
    \begin{minipage}{0.48\textwidth}
        \vspace{-0.5cm}
        \centering
\[
\renewcommand{\arraystretch}{0.7} 
\bm{Q} =
\left[
\begin{array}{@{}c@{\hspace{0.4em}}c@{\hspace{0.4em}}c@{\hspace{0.4em}}c@{\hspace{0.4em}}c@{\hspace{0.4em}}c@{}}
2.02  & -0.06 & -0.63 & -0.63 & -0.35 & -0.54 \\
-0.06 &  2.02 & -0.70 &  0.00 & -0.75 &  0.00 \\
-0.63 & -0.70 &  2.02 & -0.34 &  0.00 &  0.00 \\
-0.63 &  0.00 & -0.34 &  2.02 &  0.00 & -0.13 \\
-0.35 & -0.75 &  0.00 &  0.00 &  2.02 & -0.42 \\
-0.54 &  0.00 &  0.00 & -0.13 & -0.42 &  2.02
\end{array}
\right]
\]
    \end{minipage}
    \hspace{0.2cm}
    \begin{minipage}{0.48\textwidth}
        \centering
        \begin{tikzpicture}
            \node (v1) at (3.7,1.5)  {\Large $v_1$};
            \node (v2) at (2.7,3)  {\Large $v_2$};
            \node (v3) at (1,3)    {\Large $v_3$};
            \node (v4) at (0,1.5)    {\Large $v_4$};
            \node (v5) at (1,0)   {\Large $v_5$};
            \node (v6) at (2.7,0) {\Large $v_6$};            
            \draw[-, line width=0.2mm, >=stealth] (v1) -- (v2); 
            \draw[-, line width=0.2mm, >=stealth] (v1) -- (v3); 
            \draw[-, line width=0.2mm, >=stealth] (v1) -- (v4); 
            \draw[-, line width=0.2mm, >=stealth] (v1) -- (v5); 
            \draw[-, line width=0.2mm, >=stealth] (v1) -- (v6); 
            \draw[-, line width=0.2mm, >=stealth] (v2) -- (v3); 
            \draw[-, line width=0.2mm, >=stealth] (v2) -- (v5); 
            \draw[-, line width=0.2mm, >=stealth] (v3) -- (v4); 
            \draw[-, line width=0.2mm, >=stealth] (v4) -- (v6); 
            \draw[-, line width=0.2mm, >=stealth] (v5) -- (v6); 
        \end{tikzpicture}
    \end{minipage} 
\caption{A randomly initialized precision matrix $\bm{Q}$ (left) and its corresponding undirected graph structure (right).} 
\label{fig:extremalmarkov:illustration}
\end{figure}

\begin{figure}[H]
    \centering
    \includegraphics[width=0.95\textwidth]{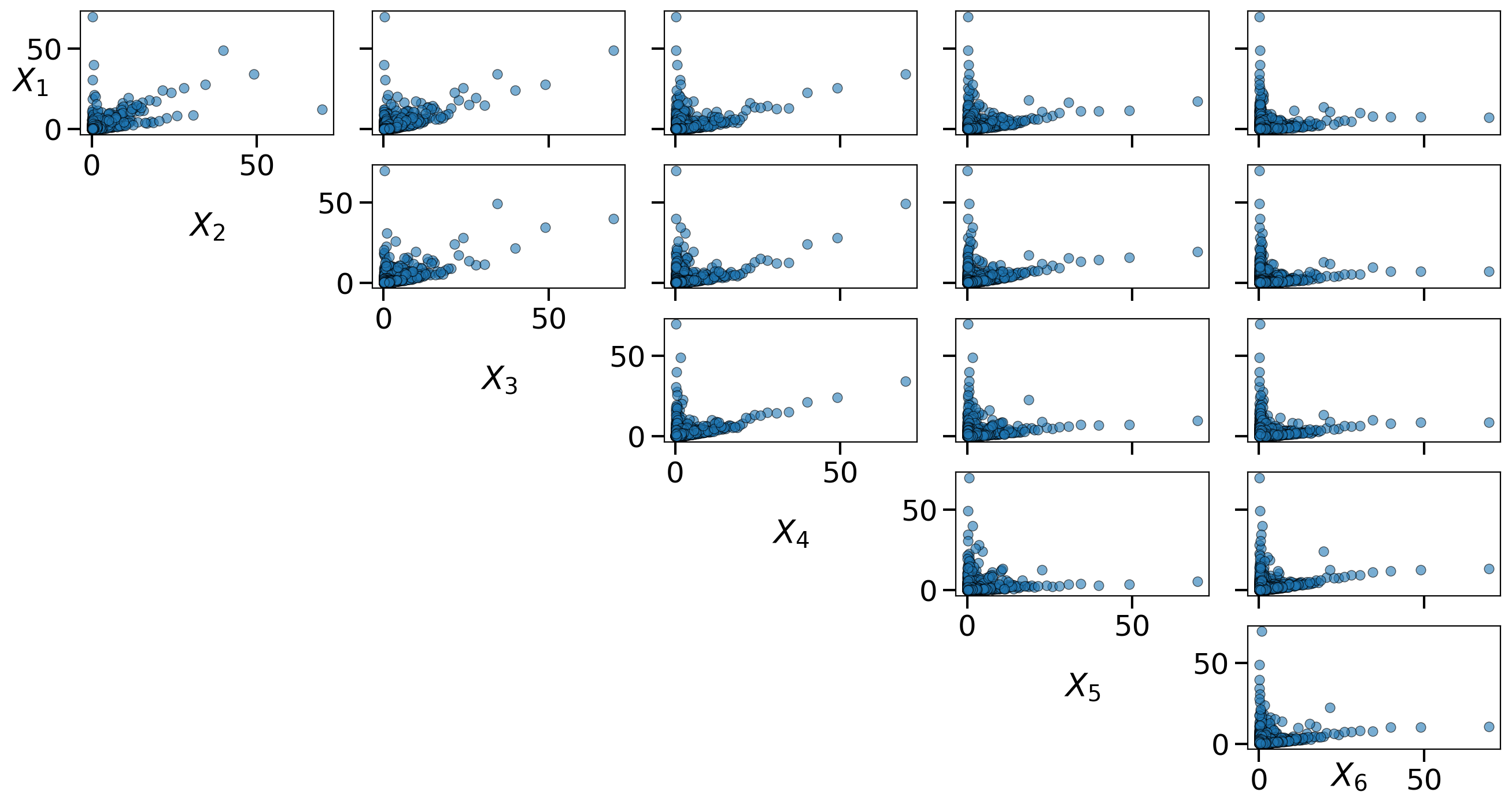}
    \caption{Pairwise scatter plots of the six components of $(X_1, \dots, X_6)^\top$ generated from the example extremal Markov network in Figure~\ref{fig:extremalmarkov:illustration}. Each variable is marginally transformed to a Pareto distribution with unit scale and shape parameter $2$.}
    \label{fig:extremalmarkov:scatter}
    \end{figure}

\subsection{Estimating DAGs for XSCMs from time series data}
\label{sec:appendix:simulation:DAGtime}

Figure~\ref{fig:XSCMts:illustration} shows an example of randomly initialized $\bm{B}_{(0)}$ and $\bm{B}_{(1)}$ for a 3-dimensional time series with lag $\tau = 1$, and the corresponding DAG.
Figure~\ref{fig:XSCMts:scatter} displays the simulated trajectory (length $T=5000$) corresponding to the XSCM in Figure~\ref{fig:XSCMts:illustration}, where occurrences of extreme values can be observed across multiple variables at the same time.

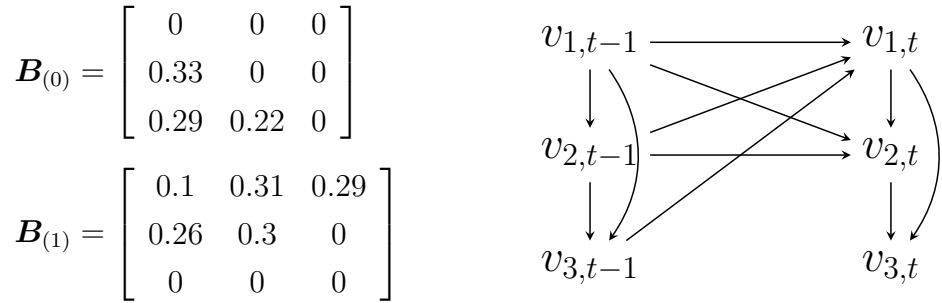
\begin{figure}[H]
    \centering
    \begin{minipage}{0.48\textwidth}
        \vspace{-0.5cm}
        \centering
\[
\renewcommand{\arraystretch}{0.7} 
\begin{aligned}
\bm{B}_{(0)} &=
\left[
\begin{array}{ccc}
0    & 0    & 0    \\
0.33 & 0    & 0    \\
0.29 & 0.22 & 0    
\end{array}
\right] \\[0.2em]
\bm{B}_{(1)} &=
\left[
\begin{array}{ccc}
0.1  & 0.31 & 0.29 \\
0.26 & 0.3  & 0    \\
0    & 0    & 0    
\end{array}
\right]
\end{aligned}
\]
    \end{minipage}
    \hspace{-1cm}
    \begin{minipage}{0.48\textwidth}
        \centering
        \begin{tikzpicture}

            \node (v1t1) at (0,3)  {\Large $v_{1,t-1}$};
            \node (v2t1) at (0,1.5)  {\Large $v_{2,t-1}$};
            \node (v3t1) at (0,0)    {\Large $v_{3,t-1}$};        
            \node (v1t0) at (4,3)  {\Large $v_{1,t}$};
            \node (v2t0) at (4,1.5)  {\Large $v_{2,t}$};
            \node (v3t0) at (4,0)    {\Large $v_{3,t}$};
            \draw[->, line width=0.2mm, >=stealth] (v1t1) -- (v2t1); 
            \draw[->, line width=0.2mm, >=stealth] (v2t1) -- (v3t1);
            \draw[->, line width=0.2mm, >=stealth, bend left=35] (v1t1) to (v3t1);
            \draw[->, line width=0.2mm, >=stealth] (v1t0) -- (v2t0); 
            \draw[->, line width=0.2mm, >=stealth] (v2t0) -- (v3t0);
            \draw[->, line width=0.2mm, >=stealth, bend left=35] (v1t0) to (v3t0);
            \draw[->, line width=0.2mm, >=stealth] (v1t1) -- (v1t0);
            \draw[->, line width=0.2mm, >=stealth] (v2t1) -- (v1t0);
            \draw[->, line width=0.2mm, >=stealth] (v3t1) -- (v1t0);
            \draw[->, line width=0.2mm, >=stealth] (v2t1) -- (v2t0);
            \draw[->, line width=0.2mm, >=stealth] (v1t1) -- (v2t0);  
        \end{tikzpicture}
    \end{minipage} 
\caption{An example of randomly initialized path coefficient matrices $\bm{B}_{(0)}$ and $\bm{B}_{(1)}$ for an XSCM with three time series variables (left), and the corresponding DAG structure (right).}
\label{fig:XSCMts:illustration}
\end{figure}

\begin{figure}[H]
    \centering
    \includegraphics[width=0.95\textwidth]{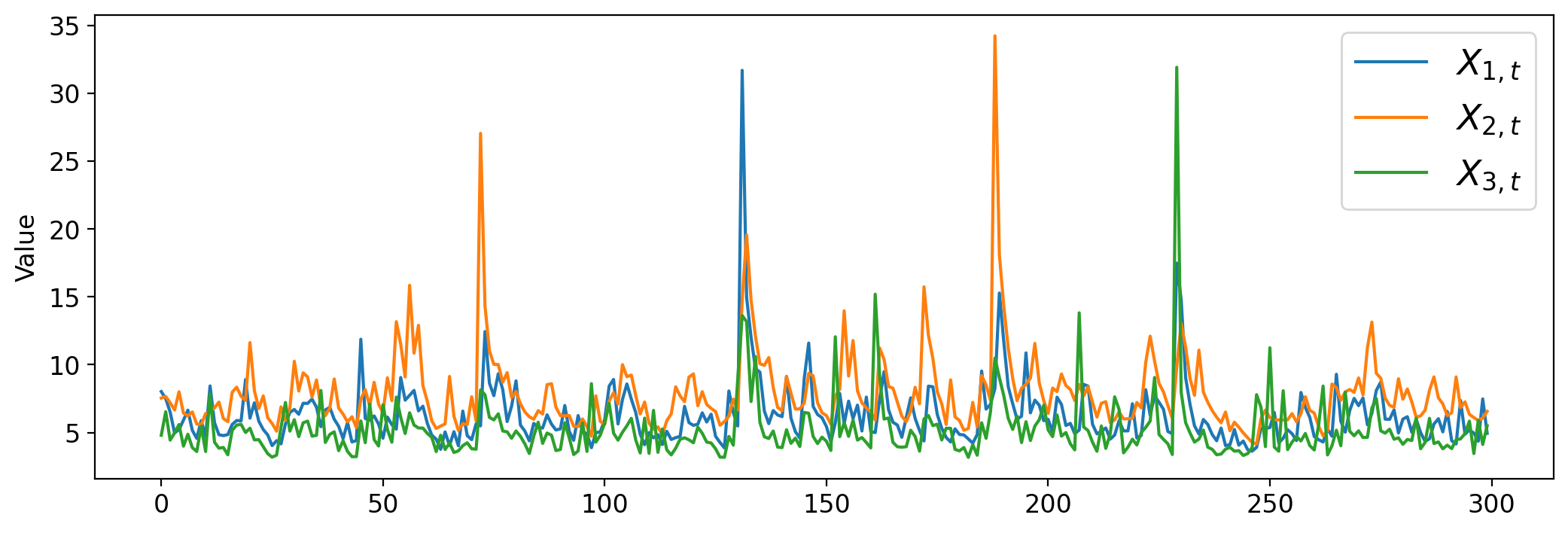}
    \caption{Simulated time series trajectory from the XSCM in Figure~\ref{fig:XSCMts:illustration}. Each time series $X_{i,t}$ is shown in different colours.}
    \label{fig:XSCMts:scatter}
\end{figure}

\subsection{DAG estimation under an alternative data-generating mechanism: max-linear model}
\label{sec:sim:XSCM:misspecific}

Max-linear structural causal models \citep{krali2025causal} assume structural equations of the form
\begin{equation}    
X_i = \bigvee_{X_j \in \mathrm{Pa}(X_i)}  \beta_{j\rightarrow i} X_j  \vee Z_i ,
\end{equation}
where $\mathrm{Pa}(X_i)$ denotes the set of causal parents of variable $X_i$, $\beta_{j\rightarrow i} > 0$ are edge weights, and $Z_i$ are independent noise variables with heavy-tailed distributions.

To evaluate the performance of our separation-based approach under model misspecification, we conduct the same experiment as in Section~\ref{sec:sim:XSCM:non-time}, but generate the data from the max-linear model instead of the XSCM. The source variables and the randomly initialized path coefficient matrix $\bm{B}$ follow the same settings as in the main text. The results are shown in Figure~\ref{fig:DAG_evaluation:misspecification}.
\begin{figure}[H]
    \centering
    \begin{subfigure}{0.475\textwidth}
        \centering
        \includegraphics[width=\textwidth]{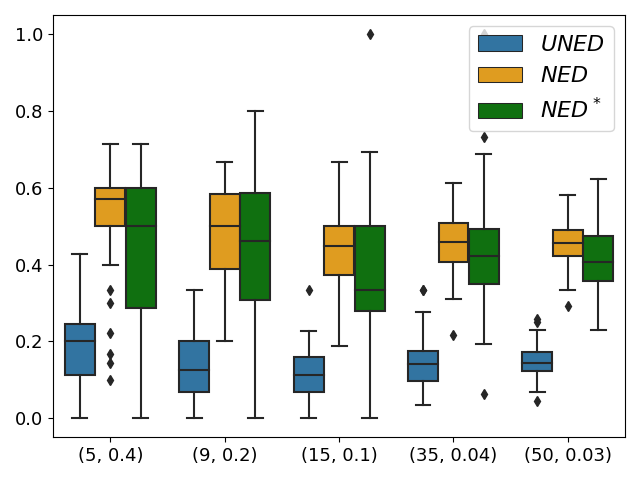}
    \end{subfigure}
    \hspace{0.5cm}
    \begin{subfigure}{0.475\textwidth}
        \centering
        \includegraphics[width=\textwidth]{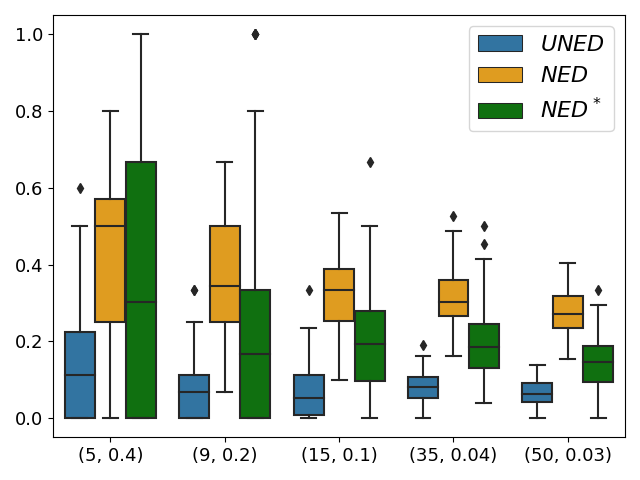} 
    \end{subfigure}
    \caption{Box plots of error measured by NED, UNED, and NED$^*$ for sample size $n=5000$ (left) and $n=50000$ (right)  under varying scale and connectivity parameters $(p, \phi)$ when data are generated from the max-linear model. Each setting is repeated for $M = 50$ randomly initialized DAGs. }
    \label{fig:DAG_evaluation:misspecification}
\end{figure}

The performance is similar to the results obtained in Section~\ref{sec:appendix:simulation:DAGnontime}. This indicates that the proposed separation-based approach is robust and can still recover the DAG when the true data-generating process deviates from the assumed XSCM model.

\subsection{Undirected graph estimation under an alternative data-generating mechanism: H\"usler--Reiss model}
\label{sec:sim:extremalNetwork:misspecification}

Another data-generating mechanism for extremal graphical models that we consider is the multivariate H\"usler--Reiss model \citep{engelke2021learning,engelke2020graphical}. To evaluate the robustness of our method for estimating the undirected graph structure, we randomly generate $M=50$ H\"usler--Reiss models and simulate samples of size $n=5000$. \citet{engelke2021learning} developed the R package \texttt{graphicalExtremes}, which provides convenient tools for randomly generating graphical H\"usler--Reiss models and simulating data from them. For example, a randomly initialized model and simulated data can be generated using the following code:

\begin{lstlisting}[language=R]
graph <- generate_random_connected_graph(num_nodes, p = edge_probability)
gamma <- generate_random_graphical_Gamma(graph)
simulate_data <- rmpareto(n = n, model = "HR", par = gamma)
\end{lstlisting}

We then apply our method to estimate the graph structure and compare the results with the method \texttt{eglearn} introduced in \citet{engelke2021learning}. The comparison is repeated for different numbers of nodes and sparsity levels of the graph. We also conduct experiments with a larger sample size $n=50000$.

The comparison results are visualized in Figure~\ref{fig:emn_evaluation:misspecification}. When the sample size is $n=5000$, the performance of \texttt{eglearn} is better than that of our method. This is expected since \texttt{eglearn} is specifically designed for learning graphical structures under the H\"usler--Reiss model. However, as the sample size increases to $n=50000$, the performance gap between the two methods decreases substantially. This suggests that although our method is not tailored to the H\"usler--Reiss model, it can still recover the underlying extremal graphical structure when sufficient data are available, demonstrating robustness to model misspecification.

\begin{figure}[t]
    \centering
    \begin{subfigure}{0.475\textwidth}
        \centering
        \includegraphics[width=\textwidth]{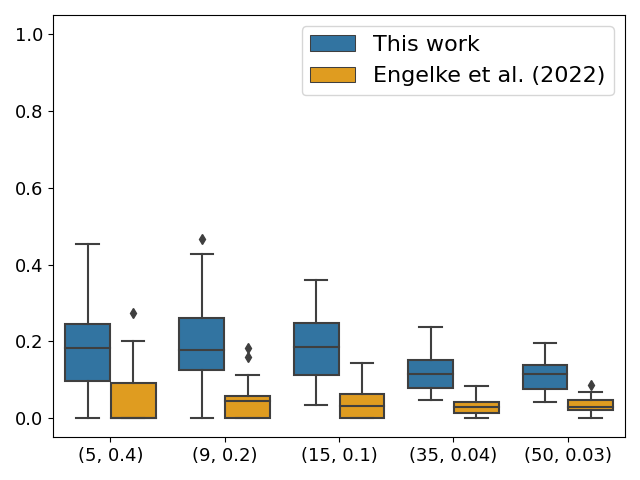}
    \end{subfigure}
    \hspace{0.5cm}
    \begin{subfigure}{0.475\textwidth}
        \centering
        \includegraphics[width=\textwidth]{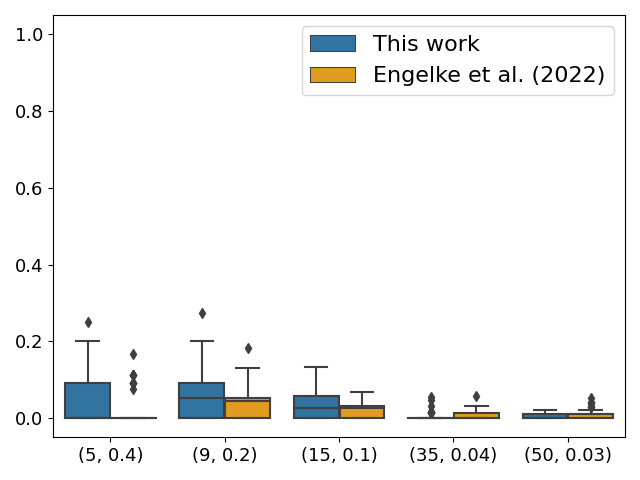} 
    \end{subfigure}
        \caption{Box plots of $\mbox{UNED}(\mathcal{E}, \hat{\mathcal{E}})$ for sample sizes $n=5000$ (left) and $n=50000$ (right) under different combinations of scale and connectivity parameters $(p,\phi)$. Data are generated from the multivariate H\"usler--Reiss model. The results are compared with the method \texttt{eglearn} proposed by \citet{engelke2021learning}. Each setting is repeated for $M=50$ randomly generated graphs. The parameters for \citet{engelke2021learning} are set to their default values, while our method uses the fixed parameters described in Section~\ref{sec:sim:overview}.}
        \label{fig:emn_evaluation:misspecification}
\end{figure}

\section{Proofs}
\label{sec:Proof}


\begin{proof}[Proof of Proposition~\ref{prop:tpdmForOnetails}]

We first prove that $(\bm{I}-\bm{B})$ is invertible and that its inverse, $(\bm{I}-\bm{B})^{-1},$ is a non-negative matrix. Assuming this holds, the direct calculation 
\begin{equation*}
    \begin{aligned}
        \bm{X}&=\bm{A}\circ \bm{Z} \oplus \bm{B}\circ \bm{X},\\
    t^{-1}(\bm{X})&=\bm{A}t^{-1}(\bm{Z})+\bm{B}t^{-1}(\bm{X}),\\
    (\bm{I}-\bm{B})t^{-1}(\bm{X})&=\bm{A}t^{-1}(\bm{Z}),\\
    \bm{X}&=(\bm{I}-\bm{B})^{-1}\bm{A}\circ \bm{Z}\\
    \end{aligned}
    \end{equation*}
shows that $\bm{X}=(\bm{I}-\bm{B})^{-1}\bm{A}\circ \bm{Z}$. For any matrix $\bm{D}=(D_{ij})_{i,j=1}^p\in \mathbb{R}^{p\times p}$, the TPDM for $\bm{D}\circ\bm{Z}$ is $\bm{D}^{(0)}{(\bm{D}^{(0)})}^\top$ where $\bm{D}^{(0)}:=\{\max(D_{ij},0)\}_{i,j=1,\dots,p}$ masks the negative values in $\bm{D}$~\citep[][Sec.~4]{cooley2019decompositions}. Courtesy of the non-negativity of $(\bm{I}-\bm{B})^{-1}$, we obtain $$\bm{\Sigma}_{\bm{X}}=\left((\bm{I}-\bm{B})^{-1}\bm{A}\right)^{(0)}\left(\left((\bm{I}-\bm{B})^{-1}\bm{A}\right)^{(0)}\right)^\top=(\bm{I}-\bm{B})^{-1}\bm{A}^{2}[(\bm{I}-\bm{B})^{-1}]^\top,$$ where the $\bm{A}$ is the scale matrix of the XSCM.

Now, to prove the invertibility and non-negativity of $(\bm{I}-\bm{B})^{-1}$, we first reorder the random vector $\bm{X}$ and $\bm{Z}$ using \textit{topological ordering}, which is a bijective mapping $\pi:\{1,\dots,p\}\mapsto \{1,\dots,p\}$ that satisfies $\pi(j)<\pi(i)$ whenever $v_j\rightarrow v_i$. For any DAG $\mathcal{G}:=(\mathcal{V},\mathcal{E})$, a topological ordering always exists (\citealp{peters2017elements};~Proposition~{B.2} and \citealp{Cormen_Leiserson_2022};~Theorem~20.12). Let $\bm{E}=(E_{ij})_{i,j=1}^p\in \{0,1\}^{p\times p}$ be the row swapping matrix for topological ordering, i.e., $E_{\pi(i),i}=1$ for $i=1,\dots,p$ and all other elements are zero. Define the reordered random vector $\bm{X}^\prime:=\bm{E}\circ\bm{X}=\bm{E}\bm{X}$, and its corresponding reordered source variables by $\bm{Z}^\prime:=\bm{E}\circ \bm{Z}=\bm{E}\bm{Z}$. Rewriting $\bm{X}^\prime$ and $\bm{Z}^\prime$ using $\bm{B}$ and $\bm{A}$, we have that
$$\bm{X}^\prime=\bm{E}\bm{B}\bm{E}^{-1}\circ \bm{X}^\prime\oplus \bm{E}\bm{A}\bm{E}^{-1}\circ \bm{Z}^\prime.$$

The corresponding reordered path coefficient matrix is $\bm{B}^\prime=\bm{E}\bm{B}\bm{E}^{-1}$ and the scale matrix is $\bm{A}^\prime=\bm{E}\bm{A}\bm{E}^{-1}$ for $\bm{X}^\prime$ and $\bm{Z}^\prime$. By the definition of topological ordering, $\bm{B}^\prime$ is a strict lower triangular matrix. Consequently, $\bm{B}^\prime$ is a convergent matrix, i.e., ${\lim_{k\to \infty}(\bm{B}^\prime)^k=\bm{0}}$, and has a spectral radius ${\rho(\bm{B}^\prime)<1}$~\citep[][Theorem~4.c]{Isaacson_1994}. Since $\bm{B}^\prime$ and $\bm{B}$ are similar matrices, the path coefficient matrix $\bm{B}$ is also convergent.

For a convergent matrix $\bm{B}$, the matrix $(\bm{I}-\bm{B})$ is invertible and can be expressed as a Neumann series: ${(\bm{I}-\bm{B})^{-1}=\sum_{k=0}^{\infty}\bm{B}^k}$~\citep[][Theorem~5]{Isaacson_1994}. Since $\bm{B}$ is non-negative, $\bm{B}^k$ remains non-negative for all $k\geq 0$, implying that $(\bm{I}-\bm{B})^{-1}$ is also non-negative.
\end{proof}

\begin{remark}
    An alternative proof is to show that $(\bm{I}-\bm{B})$ is a nonsingular $\mathcal{M}$-type matrix~\citep[][Definition~1.2]{Berman_1994}. A nonsingular $\mathcal{M}$-type matrix is a square matrix in the form $s\bm{I}-\bm{D}$ where $\rho(\bm{D})<s$ for $s>0,\bm{D}\in \mathbb{R}_+^{p\times p}$. An $\mathcal{M}$-type matrix is invertible and its inverse is positive. After deriving $\rho(\bm{B})<1$, we readily notice that $(\bm{I}-\bm{B})$ is a nonsingular $\mathcal{M}$-type matrix.
\end{remark}


To prove Proposition~\ref{prop:Mar}, we need the following lemma. 

\begin{lemma}\label{lem:XSCM-linearCausalModel}
    For any XSCM $\mathcal{S}:=(\bm{Z},\mathcal{F})$ for $\bm{X}=(X_1,\dots,X_p)^\top$ with underlying graph $\mathcal{G}=(\mathcal{V},\mathcal{E})$ and ${\bm{X}=(\bm{I}-\bm{B})^{-1}\bm{A}\circ \bm{Z}}$, there exists a Gaussian linear causal model $\mathcal{S}^\prime:=(\bm{Z}^\prime,\mathcal{F}^\prime_1)$ for $\bm{X}^\prime=(X_1^\prime,\dots,X_p^\prime)^\top$, where the source variables $\bm{Z}^\prime=(Z_1^\prime,\dots,Z_p^\prime)^\top$ are independent and identically distributed Gaussian random variables. The model $\mathcal{S}^\prime$ shares the same graph structure $\mathcal{G}=(\mathcal{V},\mathcal{E})$ and the same path coefficient matrix $\bm{B}$ and scale matrix $\bm{A}$ as $\mathcal{S}$. Moreover, the direct representation of $\bm{X}^\prime$ is $\bm{X}^\prime=(\bm{I}-\bm{B})^{-1}\bm{A}\bm{Z}^\prime$.
\end{lemma}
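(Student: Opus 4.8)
The plan is to construct the companion Gaussian model $\mathcal{S}^\prime$ \emph{explicitly}, by transporting the coefficient structure of $\mathcal{S}$ into the ordinary (untransformed) linear algebra. The key observation is that the transformed-linear operators collapse to ordinary linear operations under the map $t^{-1}$: since $t^{-1}(a\circ X)=a\,t^{-1}(X)$ and $t^{-1}(X\oplus Y)=t^{-1}(X)+t^{-1}(Y)$, applying $t^{-1}$ to the XSCM structural equation of Definition~\ref{def:model:transformedLinearCausality} yields
\[
t^{-1}(X_i)=\alpha_i\,t^{-1}(Z_i)+\sum_{X_j\in \mathrm{Pa}(X_i)}\beta_{j\rightarrow i}\,t^{-1}(X_j),\qquad i=1,\dots,p,
\]
which is precisely the linear recursion defining a classical Gaussian linear SCM with scale coefficients $\alpha_i$ and path coefficients $\beta_{j\rightarrow i}$. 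This recognition is the entire conceptual content of the lemma; everything else is verification.

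First I would define the source vector $\bm{Z}^\prime=(Z_1^\prime,\dots,Z_p^\prime)^\top$ to consist of i.i.d.\ standard Gaussian variables, and specify $\mathcal{F}^\prime_1$ through the ordinary-linear structural equations $X_i^\prime=\alpha_i Z_i^\prime+\sum_{X_j^\prime\in \mathrm{Pa}(X_i^\prime)}\beta_{j\rightarrow i}X_j^\prime$, i.e.\ $\bm{X}^\prime=\bm{A}\bm{Z}^\prime+\bm{B}\bm{X}^\prime$ in matrix form, reusing the identical scale matrix $\bm{A}$ and path coefficient matrix $\bm{B}$ of $\mathcal{S}$. I would then verify that $\mathcal{S}^\prime$ induces the same graph: by definition the directed edge set of an SCM is read off from the nonzero structural coefficients, so $(v_j,v_i)\in\mathcal{E}^\prime$ if and only if $B_{ij}=\beta_{j\rightarrow i}>0$; since $\mathcal{S}$ and $\mathcal{S}^\prime$ share the same $\bm{B}$, their edge sets coincide and both induce $\mathcal{G}=(\mathcal{V},\mathcal{E})$, with $\mathcal{S}^\prime$ thereby inheriting acyclicity.

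Finally I would invoke Proposition~\ref{prop:tpdmForOnetails}, which guarantees that $(\bm{I}-\bm{B})$ is invertible with non-negative inverse. Rearranging $\bm{X}^\prime=\bm{A}\bm{Z}^\prime+\bm{B}\bm{X}^\prime$ gives $(\bm{I}-\bm{B})\bm{X}^\prime=\bm{A}\bm{Z}^\prime$, and solving produces the claimed direct representation $\bm{X}^\prime=(\bm{I}-\bm{B})^{-1}\bm{A}\bm{Z}^\prime$. I do not expect a substantive obstacle here: the only point requiring care is the simultaneous well-definedness of the transformed-linear and the Gaussian recursions, which in both cases reduces to the invertibility of $(\bm{I}-\bm{B})$ already supplied by Proposition~\ref{prop:tpdmForOnetails} via the topological-ordering and convergent-matrix argument. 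The real payoff of the lemma—exploited subsequently in Proposition~\ref{prop:Mar}—is that, because $\mathrm{Cov}(\bm{Z}^\prime)=\bm{I}$, the covariance of $\bm{X}^\prime$ equals $(\bm{I}-\bm{B})^{-1}\bm{A}^2[(\bm{I}-\bm{B})^{-1}]^\top$, which is exactly the TPDM $\bm{\Sigma}_{\bm{X}}$ of $\bm{X}$; this matrix identity is what will allow partial tail-covariance to be equated with Gaussian conditional covariance.
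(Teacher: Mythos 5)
Your proposal is correct and follows essentially the same route as the paper: both construct the companion Gaussian linear SCM by reusing the scale matrix $\bm{A}$ and path coefficient matrix $\bm{B}$ (so the induced graph, read off from the nonzero entries of $\bm{B}$, is identical), and both obtain the direct representation $\bm{X}^\prime=(\bm{I}-\bm{B})^{-1}\bm{A}\bm{Z}^\prime$ by invoking the invertibility of $\bm{I}-\bm{B}$ established in Proposition~\ref{prop:tpdmForOnetails}. Your additional observation that $t^{-1}$ collapses the transformed-linear recursion to the ordinary linear one, and your closing remark identifying ${\rm Cov}(\bm{X}^\prime)$ with the TPDM, are just more explicit statements of what the paper leaves implicit and defers to the proof of Proposition~\ref{prop:Mar}.
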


\begin{proof}[Proof of Lemma~\ref{lem:XSCM-linearCausalModel}]
    This follows from the existence of an equivalent Gaussian linear structural causal model that shares the same path coefficient matrix $\bm{B}$ and scale matrix $\bm{A}$ as the given XSCM, which follows directly from the standard definition of Gaussian linear structural causal models (see, e.g., \cite{spirtes2000causation}) together with our general definition of $\bm{B}$. In particular, $\bm{X}^\prime$ can be written as $\bm{X}^\prime = (\bm{I}-\bm{B})^{-1}\bm{A}\bm{Z}^\prime$, and the invertibility of $\bm{I}-\bm{B}$ is guaranteed by Proposition~\ref{prop:tpdmForOnetails}.
\end{proof}

\begin{proof}[Proof of Proposition~\ref{prop:Mar}]

For any XSCM $\mathcal{S}:=(\bm{Z},\mathcal{F})$ for $\bm{X}=(X_1,\dots,X_p)^\top$, where $\bm{X}=(\bm{I}-\bm{B})^{-1}\bm{A}\circ \bm{Z}$, let $\mathcal{S}^\prime:=(\bm{Z}^\prime,\mathcal{F}^\prime)$ for $\bm{X}^\prime=(X_1^\prime,\dots,X_p^\prime)^\top$ be the corresponding Gaussian linear causal model as stated in Lemma~\ref{lem:XSCM-linearCausalModel}.
They share the same underlying graph structure $\mathcal{G}=(\mathcal{V},\mathcal{E})$ and the same path coefficient matrix $\bm{B}$ and scale matrix $\bm{A}$ as $\mathcal{S}$. The system $\bm{X}^\prime=(\bm{I}-\bm{B})^{-1}\bm{A}\bm{Z}^\prime$ is a linear function of $\bm{Z}^\prime$ and follows a joint Gaussian distribution.

It is well established that Gaussian linear structural causal models satisfy both the causal Markov and causal faithfulness conditions~\citep[][Theorem~3.2 and Proposition~6.31]{spirtes2000causation,peters2017elements}. Consequently, for any two vertices $v_i,v_j\in \mathcal{V}$ and $\bm{\mathcal{S}}_v\subset \mathcal{V}\setminus \{v_i,v_j\}$, vertices $v_i$ and $v_j$ are d-separated by $\bm{\mathcal{S}}_v$ if and only if the conditional covariance $ {\rm Cov}[X^\prime_i,X^\prime_j\mid\bm{\mathcal{S}}^\prime]=0$, where $\bm{\mathcal{S}}^\prime=\{X_k^\prime\}_{v_k\in \bm{\mathcal{S}}_v}$.

Furthermore, the conditional covariance ${\rm Cov}[X_i,X_j\mid\bm{\mathcal{S}}^\prime]$ has the same form as the partial tail-covariance $\sigma_{ij|\bm{\mathcal{S}}}$ where $\bm{\mathcal{S}}=\{X_k\}_{v_k\in \bm{\mathcal{S}}_v}$. Therefore, for any $v_i,v_j\in \mathcal{V}$ and $\bm{\mathcal{S}}_v\subset \mathcal{V}\setminus \{v_i,v_j\}$, $v_i \perp_{\mathcal{G}} v_j\mid\bm{\mathcal{S}}_v$ if and only if $\sigma_{ij| \bm{\mathcal{S}}}=0$ where $\bm{\mathcal{S}}=\{X_k\}_{v_k\in \bm{\mathcal{S}}_v}$. This equivalence establishes both the tail-causal Markov and tail-causal faithfulness properties for XSCM.
\end{proof}


\begin{proof}[Proof of Proposition~\ref{prop:extrGlobalMarkov}]
    Given a random vector $\bm{X}=(X_1,\dots,X_p)^\top$ with ${X_i=\bm{a_i}\circ \bm{Z}\in V_+^q}$ where $\bm{Z}=(Z_1,\dots,Z_q)$ are independent and identically-distributed regularly varying random variables with tail index $\alpha=2$. Let ${\bm{A}=(\bm{a}_1,\dots,\bm{a}_p)^\top}$. The TPDM of $\bm{X}$ is given by $\bm{\Sigma}_{\bm{X}}=\bm{A}\bm{A}^\top$ and the induced undirected graph is $\check{\mathcal{G}}=(\mathcal{V},\mathcal{E})$.

    Let a corresponding random vector $\bm{X}^\prime=\bm{A}\bm{Z}^\prime$ with $\bm{Z}^\prime\sim \mathcal{N}_p(0,\bm{I})$. The vector $\bm{X}^\prime$ is a Gaussian Markov random field~\citep[][Definition~2.1]{rue2005gaussian} with regards to the same underlying undirected graph $\check{\mathcal{G}}=(\mathcal{V},\mathcal{E})$ for the extremal Markov network $\bm{X}$. Furthermore, the covariance matrix for $\bm{X}^\prime$ is given by ${{\rm Cov}[\bm{X}^\prime, \bm{X}^\prime]=\bm{A}\bm{A}^\top=\bm{\Sigma}_{\bm{X}}}$. For Gaussian Markov random fields, the equivalence between the global Markov property and the pairwise Markov property are established~\citep[][Theorem~2.4]{rue2005gaussian}. Also, for any two vertces $v_i,v_j\in \mathcal{V}$ and a separation set $\bm{\mathcal{S}}_v\subset \mathcal{V}\setminus \{v_i,v_j\}$, the conditional covariance ${\rm Cov}[X_i^\prime,X_j^\prime\mid\bm{\mathcal{S}}^\prime]$ equals the partial tail-covariance $\sigma_{ij|\bm{\mathcal{S}}}$, where $\bm{\mathcal{S}}^\prime=\{X_k^\prime\}_{v_k}\in \bm{\mathcal{S}}_v$ and $\bm{\mathcal{S}}=\{X_k\}_{v_k\in \bm{\mathcal{S}}_v}$. For a joint Gaussian random vector, the conditional covariance equals zero if and only if conditional independence holds. Therefore, the two vertices $v_i,v_j$ are separated by $\bm{\mathcal{S}}_v$ if and only if PTCC $\gamma_{ij|\bm{\mathcal{S}}}=0$ for $\bm{\mathcal{S}}=\{X_k\}_{v_k\in \bm{\mathcal{S}}_v}$. This concludes the proof.
\end{proof}
    
\section{Supplementary numerical experiments}
\label{sec:supplementaryNumericalExperiments}

\subsection{Sensitivity analysis of $\alpha$ and $q$ in Section~\ref{sec:sim:XSCM:non-time}}\label{sec:sensitivity}
To investigate the sensitivity of the DAG estimation to the choice of hyperparameters $\alpha$ and $q$, we consider the scenario of estimating the DAG of an XSCM model using cross-sectional data. We conduct experiments under different settings of the number of variables $p$ and sparsity level $\phi$, and report the recovery performance—measured by UNED, NED, and NED$^*$---across various choices of $\alpha$ and $q$. Figure~\ref{fig:sensitivity} presents the results. The performance is not sensitive to the significance level $\alpha$, and it remains stable with respect to the quantile $q$ when $q$ is close to 1.
\begin{figure}[ht!]
    \centering
    \includegraphics[width=\textwidth]{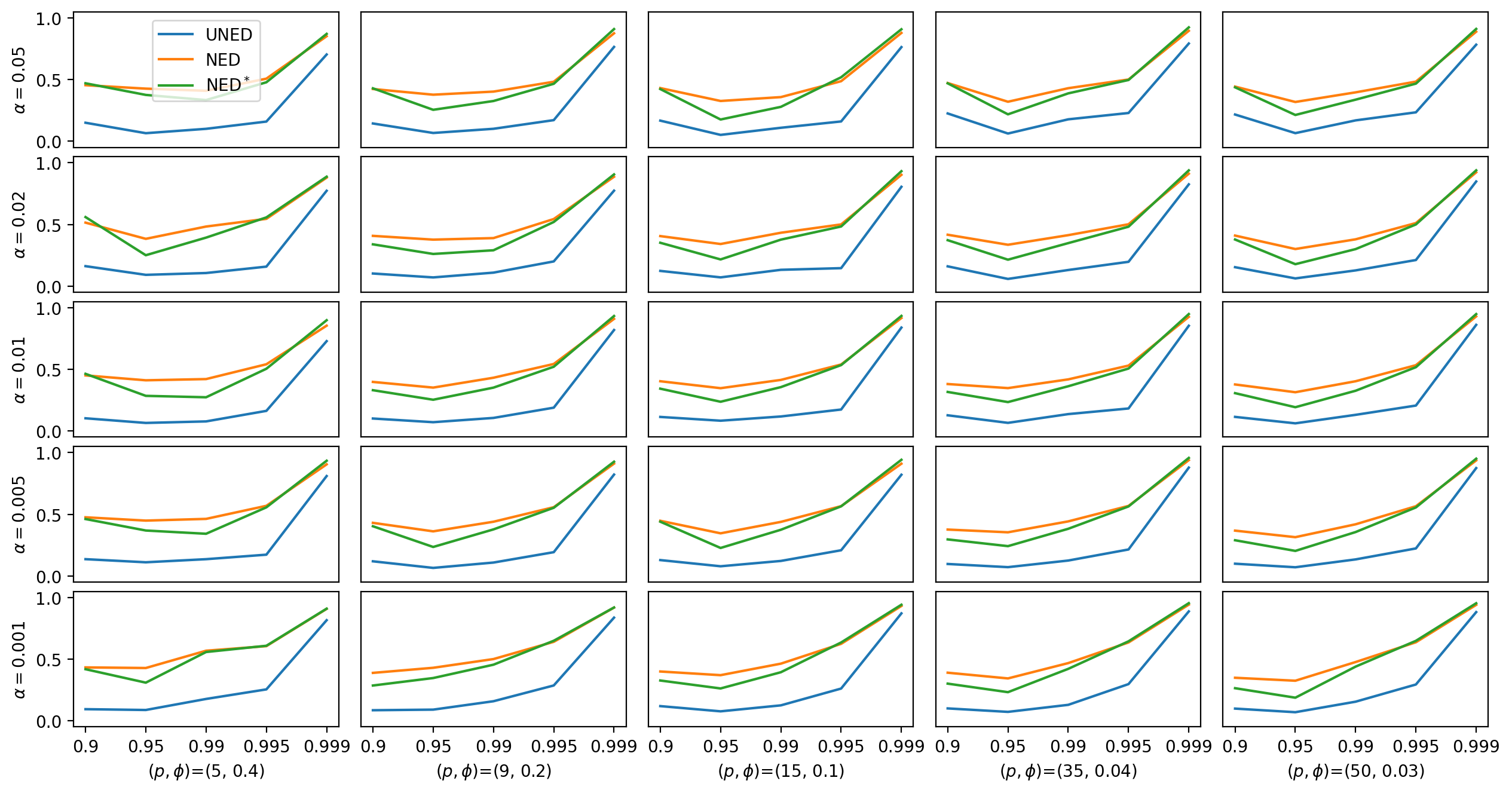}
    \caption{Sensitivity analysis of the hyperparameters $\alpha$ (rows) and $q$ (x-axis) for the DAG estimation described in Section~\ref{sec:sim:XSCM:non-time} under different settings of variable $p$ and sparsity level $\phi$ (columns); performance is measured by UNED, NED, and NED$^*$.}
    \label{fig:sensitivity}
\end{figure}

\subsection{Sensitivity analysis of parameters for the Danube river data}\label{sec:danubeSensitivity}

To investigate the impact of tuning parameters on the estimated causal structure, we conduct a sensitivity analysis on the Danube river dataset by varying the quantile level $q$, the significance level $\alpha$, and the lag parameter $\tau$. Figures~\ref{fig:danubeSensitivityTau0}--\ref{fig:danubeSensitivityTau2} report the estimated graphs under different combinations of these parameters.

A clear pattern emerges across all values of $\tau$: the estimated graphs become increasingly sparse as the significance level $\alpha$ decreases and as the quantile level $q$ increases. For the temporal case $\tau = 1,2$, the inferred lagged causal directions predominantly point from upstream to downstream stations, which is consistent with the physical flow direction of the river system.

Overall, the estimated graph obtained at $q = 0.99$ and $\alpha = 0.005$ achieves a favorable balance between sparsity and interpretability. This parameter configuration is adopted in the Danube river data application.

\begin{figure}[htbp]
    \centering
    \begin{subfigure}[b]{\textwidth}
        \centering
        \begin{tikzpicture}[scale=0.1471,transform shape,every node/.style={circle, draw, outer sep=0pt,minimum size=35pt,fill=danubeBlue!50, inner sep=0pt,font=\bfseries\fontsize{30}{35}\selectfont}]
            \begin{scope}[yscale=0.601]
            \input{pic/370.Danbube_river/ours0_0.5_0.001.tex}
            \end{scope}
        \end{tikzpicture}\hspace{1.5em}
        \begin{tikzpicture}[scale=0.1471,transform shape,every node/.style={circle, draw, outer sep=0pt,minimum size=35pt,fill=danubeBlue!50, inner sep=0pt,font=\bfseries\fontsize{30}{35}\selectfont}]
            \begin{scope}[yscale=0.601]
            \input{pic/370.Danbube_river/ours0_0.5_0.005.tex}
            \end{scope}
        \end{tikzpicture}\hspace{1.5em}
        \begin{tikzpicture}[scale=0.1471,transform shape,every node/.style={circle, draw, outer sep=0pt,minimum size=35pt,fill=danubeBlue!50, inner sep=0pt,font=\bfseries\fontsize{30}{35}\selectfont}]
            \begin{scope}[yscale=0.601]
            \input{pic/370.Danbube_river/ours0_0.5_0.01.tex}
            \end{scope}
        \end{tikzpicture}\hspace{1.5em}
        \begin{tikzpicture}[scale=0.1471,transform shape,every node/.style={circle, draw, outer sep=0pt,minimum size=35pt,fill=danubeBlue!50, inner sep=0pt,font=\bfseries\fontsize{30}{35}\selectfont}]
            \begin{scope}[yscale=0.601]
            \input{pic/370.Danbube_river/ours0_0.5_0.05.tex}
            \end{scope}
        \end{tikzpicture}
    \end{subfigure}\\[0.8em]
    \begin{subfigure}[b]{\textwidth}
        \centering
        \begin{tikzpicture}[scale=0.1471,transform shape,every node/.style={circle, draw, outer sep=0pt,minimum size=35pt,fill=danubeBlue!50, inner sep=0pt,font=\bfseries\fontsize{30}{35}\selectfont}]
            \begin{scope}[yscale=0.601]
            \input{pic/370.Danbube_river/ours0_1_0.001.tex}
            \end{scope}
        \end{tikzpicture}\hspace{1.5em}
        \begin{tikzpicture}[scale=0.1471,transform shape,every node/.style={circle, draw, outer sep=0pt,minimum size=35pt,fill=danubeBlue!50, inner sep=0pt,font=\bfseries\fontsize{30}{35}\selectfont}]
            \begin{scope}[yscale=0.601]
            \input{pic/370.Danbube_river/ours0_1_0.005.tex}
            \end{scope}
        \end{tikzpicture}\hspace{1.5em}
        \begin{tikzpicture}[scale=0.1471,transform shape,every node/.style={circle, draw, outer sep=0pt,minimum size=35pt,fill=danubeBlue!50, inner sep=0pt,font=\bfseries\fontsize{30}{35}\selectfont}]
            \begin{scope}[yscale=0.601]
            \input{pic/370.Danbube_river/ours0_1_0.01.tex}
            \end{scope}
        \end{tikzpicture}\hspace{1.5em}
        \begin{tikzpicture}[scale=0.1471,transform shape,every node/.style={circle, draw, outer sep=0pt,minimum size=35pt,fill=danubeBlue!50, inner sep=0pt,font=\bfseries\fontsize{30}{35}\selectfont}]
            \begin{scope}[yscale=0.601]
            \input{pic/370.Danbube_river/ours0_1_0.05.tex}
            \end{scope}
        \end{tikzpicture}
    \end{subfigure}\\[0.8em]
    \begin{subfigure}[b]{\textwidth}
        \centering
        \begin{tikzpicture}[scale=0.1471,transform shape,every node/.style={circle, draw, outer sep=0pt,minimum size=35pt,fill=danubeBlue!50, inner sep=0pt,font=\bfseries\fontsize{30}{35}\selectfont}]
            \begin{scope}[yscale=0.601]
            \input{pic/370.Danbube_river/ours0_5_0.001.tex}
            \end{scope}
        \end{tikzpicture}\hspace{1.5em}
        \begin{tikzpicture}[scale=0.1471,transform shape,every node/.style={circle, draw, outer sep=0pt,minimum size=35pt,fill=danubeBlue!50, inner sep=0pt,font=\bfseries\fontsize{30}{35}\selectfont}]
            \begin{scope}[yscale=0.601]
            \input{pic/370.Danbube_river/ours0_5_0.005.tex}
            \end{scope}
        \end{tikzpicture}\hspace{1.5em}
        \begin{tikzpicture}[scale=0.1471,transform shape,every node/.style={circle, draw, outer sep=0pt,minimum size=35pt,fill=danubeBlue!50, inner sep=0pt,font=\bfseries\fontsize{30}{35}\selectfont}]
            \begin{scope}[yscale=0.601]
            \input{pic/370.Danbube_river/ours0_5_0.01.tex}
            \end{scope}
        \end{tikzpicture}\hspace{1.5em}
        \begin{tikzpicture}[scale=0.1471,transform shape,every node/.style={circle, draw, outer sep=0pt,minimum size=35pt,fill=danubeBlue!50, inner sep=0pt,font=\bfseries\fontsize{30}{35}\selectfont}]
            \begin{scope}[yscale=0.601]
            \input{pic/370.Danbube_river/ours0_5_0.05.tex}
            \end{scope}
        \end{tikzpicture}
    \end{subfigure}\\[0.8em]
    \begin{subfigure}[b]{\textwidth}
        \centering
        \begin{tikzpicture}[scale=0.1471,transform shape,every node/.style={circle, draw, outer sep=0pt,minimum size=35pt,fill=danubeBlue!50, inner sep=0pt,font=\bfseries\fontsize{30}{35}\selectfont}]
            \begin{scope}[yscale=0.601]
            \input{pic/370.Danbube_river/ours0_10_0.001.tex}
            \end{scope}
        \end{tikzpicture}\hspace{1.5em}
        \begin{tikzpicture}[scale=0.1471,transform shape,every node/.style={circle, draw, outer sep=0pt,minimum size=35pt,fill=danubeBlue!50, inner sep=0pt,font=\bfseries\fontsize{30}{35}\selectfont}]
            \begin{scope}[yscale=0.601]
            \input{pic/370.Danbube_river/ours0_10_0.005.tex}
            \end{scope}
        \end{tikzpicture}\hspace{1.5em}
        \begin{tikzpicture}[scale=0.1471,transform shape,every node/.style={circle, draw, outer sep=0pt,minimum size=35pt,fill=danubeBlue!50, inner sep=0pt,font=\bfseries\fontsize{30}{35}\selectfont}]
            \begin{scope}[yscale=0.601]
            \input{pic/370.Danbube_river/ours0_10_0.01.tex}
            \end{scope}
        \end{tikzpicture}\hspace{1.5em}
        \begin{tikzpicture}[scale=0.1471,transform shape,every node/.style={circle, draw, outer sep=0pt,minimum size=35pt,fill=danubeBlue!50, inner sep=0pt,font=\bfseries\fontsize{30}{35}\selectfont}]
            \begin{scope}[yscale=0.601]
            \input{pic/370.Danbube_river/ours0_10_0.05.tex}
            \end{scope}
        \end{tikzpicture}
    \end{subfigure}
    \caption{Estimated graphs for Danube river dataset with $\tau = 0$ across different quantile level $q$ (rows: $0.995$, $0.99$, $0.95$, $0.9$) and $\alpha$ (columns: $0.001$, $0.005$, $0.01$, $0.05$).}
    \label{fig:danubeSensitivityTau0}
\end{figure}

\begin{figure}[htbp]
    \centering
    \begin{subfigure}[b]{\textwidth}
        \centering
        \begin{tikzpicture}[scale=0.1471,transform shape,every node/.style={circle, draw, outer sep=0pt,minimum size=35pt,fill=danubeBlue!50, inner sep=0pt,font=\bfseries\fontsize{30}{35}\selectfont}]
            \begin{scope}[yscale=0.601]
            \input{pic/370.Danbube_river/ours1_0.5_0.001.tex}
            \end{scope}
        \end{tikzpicture}\hspace{1.5em}
        \begin{tikzpicture}[scale=0.1471,transform shape,every node/.style={circle, draw, outer sep=0pt,minimum size=35pt,fill=danubeBlue!50, inner sep=0pt,font=\bfseries\fontsize{30}{35}\selectfont}]
            \begin{scope}[yscale=0.601]
            \input{pic/370.Danbube_river/ours1_0.5_0.005.tex}
            \end{scope}
        \end{tikzpicture}\hspace{1.5em}
        \begin{tikzpicture}[scale=0.1471,transform shape,every node/.style={circle, draw, outer sep=0pt,minimum size=35pt,fill=danubeBlue!50, inner sep=0pt,font=\bfseries\fontsize{30}{35}\selectfont}]
            \begin{scope}[yscale=0.601]
            \input{pic/370.Danbube_river/ours1_0.5_0.01.tex}
            \end{scope}
        \end{tikzpicture}\hspace{1.5em}
        \begin{tikzpicture}[scale=0.1471,transform shape,every node/.style={circle, draw, outer sep=0pt,minimum size=35pt,fill=danubeBlue!50, inner sep=0pt,font=\bfseries\fontsize{30}{35}\selectfont}]
            \begin{scope}[yscale=0.601]
            \input{pic/370.Danbube_river/ours1_0.5_0.05.tex}
            \end{scope}
        \end{tikzpicture}
    \end{subfigure}\\[0.8em]
    \begin{subfigure}[b]{\textwidth}
        \centering
        \begin{tikzpicture}[scale=0.1471,transform shape,every node/.style={circle, draw, outer sep=0pt,minimum size=35pt,fill=danubeBlue!50, inner sep=0pt,font=\bfseries\fontsize{30}{35}\selectfont}]
            \begin{scope}[yscale=0.601]
            \input{pic/370.Danbube_river/ours1_1_0.001.tex}
            \end{scope}
        \end{tikzpicture}\hspace{1.5em}
        \begin{tikzpicture}[scale=0.1471,transform shape,every node/.style={circle, draw, outer sep=0pt,minimum size=35pt,fill=danubeBlue!50, inner sep=0pt,font=\bfseries\fontsize{30}{35}\selectfont}]
            \begin{scope}[yscale=0.601]
            \input{pic/370.Danbube_river/ours1_1_0.005.tex}
            \end{scope}
        \end{tikzpicture}\hspace{1.5em}
        \begin{tikzpicture}[scale=0.1471,transform shape,every node/.style={circle, draw, outer sep=0pt,minimum size=35pt,fill=danubeBlue!50, inner sep=0pt,font=\bfseries\fontsize{30}{35}\selectfont}]
            \begin{scope}[yscale=0.601]
            \input{pic/370.Danbube_river/ours1_1_0.01.tex}
            \end{scope}
        \end{tikzpicture}\hspace{1.5em}
        \begin{tikzpicture}[scale=0.1471,transform shape,every node/.style={circle, draw, outer sep=0pt,minimum size=35pt,fill=danubeBlue!50, inner sep=0pt,font=\bfseries\fontsize{30}{35}\selectfont}]
            \begin{scope}[yscale=0.601]
            \input{pic/370.Danbube_river/ours1_1_0.05.tex}
            \end{scope}
        \end{tikzpicture}
    \end{subfigure}\\[0.8em]
    \begin{subfigure}[b]{\textwidth}
        \centering
        \begin{tikzpicture}[scale=0.1471,transform shape,every node/.style={circle, draw, outer sep=0pt,minimum size=35pt,fill=danubeBlue!50, inner sep=0pt,font=\bfseries\fontsize{30}{35}\selectfont}]
            \begin{scope}[yscale=0.601]
            \input{pic/370.Danbube_river/ours1_5_0.001.tex}
            \end{scope}
        \end{tikzpicture}\hspace{1.5em}
        \begin{tikzpicture}[scale=0.1471,transform shape,every node/.style={circle, draw, outer sep=0pt,minimum size=35pt,fill=danubeBlue!50, inner sep=0pt,font=\bfseries\fontsize{30}{35}\selectfont}]
            \begin{scope}[yscale=0.601]
            \input{pic/370.Danbube_river/ours1_5_0.005.tex}
            \end{scope}
        \end{tikzpicture}\hspace{1.5em}
        \begin{tikzpicture}[scale=0.1471,transform shape,every node/.style={circle, draw, outer sep=0pt,minimum size=35pt,fill=danubeBlue!50, inner sep=0pt,font=\bfseries\fontsize{30}{35}\selectfont}]
            \begin{scope}[yscale=0.601]
            \input{pic/370.Danbube_river/ours1_5_0.01.tex}
            \end{scope}
        \end{tikzpicture}\hspace{1.5em}
        \begin{tikzpicture}[scale=0.1471,transform shape,every node/.style={circle, draw, outer sep=0pt,minimum size=35pt,fill=danubeBlue!50, inner sep=0pt,font=\bfseries\fontsize{30}{35}\selectfont}]
            \begin{scope}[yscale=0.601]
            \input{pic/370.Danbube_river/ours1_5_0.05.tex}
            \end{scope}
        \end{tikzpicture}
    \end{subfigure}\\[0.8em]
    \begin{subfigure}[b]{\textwidth}
        \centering
        \begin{tikzpicture}[scale=0.1471,transform shape,every node/.style={circle, draw, outer sep=0pt,minimum size=35pt,fill=danubeBlue!50, inner sep=0pt,font=\bfseries\fontsize{30}{35}\selectfont}]
            \begin{scope}[yscale=0.601]
            \input{pic/370.Danbube_river/ours1_10_0.001.tex}
            \end{scope}
        \end{tikzpicture}\hspace{1.5em}
        \begin{tikzpicture}[scale=0.1471,transform shape,every node/.style={circle, draw, outer sep=0pt,minimum size=35pt,fill=danubeBlue!50, inner sep=0pt,font=\bfseries\fontsize{30}{35}\selectfont}]
            \begin{scope}[yscale=0.601]
            \input{pic/370.Danbube_river/ours1_10_0.005.tex}
            \end{scope}
        \end{tikzpicture}\hspace{1.5em}
        \begin{tikzpicture}[scale=0.1471,transform shape,every node/.style={circle, draw, outer sep=0pt,minimum size=35pt,fill=danubeBlue!50, inner sep=0pt,font=\bfseries\fontsize{30}{35}\selectfont}]
            \begin{scope}[yscale=0.601]
            \input{pic/370.Danbube_river/ours1_10_0.01.tex}
            \end{scope}
        \end{tikzpicture}\hspace{1.5em}
        \begin{tikzpicture}[scale=0.1471,transform shape,every node/.style={circle, draw, outer sep=0pt,minimum size=35pt,fill=danubeBlue!50, inner sep=0pt,font=\bfseries\fontsize{30}{35}\selectfont}]
            \begin{scope}[yscale=0.601]
            \input{pic/370.Danbube_river/ours1_10_0.05.tex}
            \end{scope}
        \end{tikzpicture}
    \end{subfigure}
    \caption{Estimated graphs for Danube river dataset with $\tau = 1$ across different quantile level $q$ (rows: $0.995$, $0.99$, $0.95$, $0.9$) and $\alpha$ (columns: $0.001$, $0.005$, $0.01$, $0.05$).}
    \label{fig:danubeSensitivityTau1}
\end{figure}

\begin{figure}[htbp]
    \centering
    \begin{subfigure}[b]{\textwidth}
        \centering
        \begin{tikzpicture}[scale=0.1471,transform shape,every node/.style={circle, draw, outer sep=0pt,minimum size=35pt,fill=danubeBlue!50, inner sep=0pt,font=\bfseries\fontsize{30}{35}\selectfont}]
            \begin{scope}[yscale=0.601]
            \input{pic/370.Danbube_river/ours2_0.5_0.001.tex}
            \end{scope}
        \end{tikzpicture}\hspace{1.5em}
        \begin{tikzpicture}[scale=0.1471,transform shape,every node/.style={circle, draw, outer sep=0pt,minimum size=35pt,fill=danubeBlue!50, inner sep=0pt,font=\bfseries\fontsize{30}{35}\selectfont}]
            \begin{scope}[yscale=0.601]
            \input{pic/370.Danbube_river/ours2_0.5_0.005.tex}
            \end{scope}
        \end{tikzpicture}\hspace{1.5em}
        \begin{tikzpicture}[scale=0.1471,transform shape,every node/.style={circle, draw, outer sep=0pt,minimum size=35pt,fill=danubeBlue!50, inner sep=0pt,font=\bfseries\fontsize{30}{35}\selectfont}]
            \begin{scope}[yscale=0.601]
            \input{pic/370.Danbube_river/ours2_0.5_0.01.tex}
            \end{scope}
        \end{tikzpicture}\hspace{1.5em}
        \begin{tikzpicture}[scale=0.1471,transform shape,every node/.style={circle, draw, outer sep=0pt,minimum size=35pt,fill=danubeBlue!50, inner sep=0pt,font=\bfseries\fontsize{30}{35}\selectfont}]
            \begin{scope}[yscale=0.601]
            \input{pic/370.Danbube_river/ours2_0.5_0.05.tex}
            \end{scope}
        \end{tikzpicture}
    \end{subfigure}\\[0.8em]
    \begin{subfigure}[b]{\textwidth}
        \centering
        \begin{tikzpicture}[scale=0.1471,transform shape,every node/.style={circle, draw, outer sep=0pt,minimum size=35pt,fill=danubeBlue!50, inner sep=0pt,font=\bfseries\fontsize{30}{35}\selectfont}]
            \begin{scope}[yscale=0.601]
            \input{pic/370.Danbube_river/ours2_1_0.001.tex}
            \end{scope}
        \end{tikzpicture}\hspace{1.5em}
        \begin{tikzpicture}[scale=0.1471,transform shape,every node/.style={circle, draw, outer sep=0pt,minimum size=35pt,fill=danubeBlue!50, inner sep=0pt,font=\bfseries\fontsize{30}{35}\selectfont}]
            \begin{scope}[yscale=0.601]
            \input{pic/370.Danbube_river/ours2_1_0.005.tex}
            \end{scope}
        \end{tikzpicture}\hspace{1.5em}
        \begin{tikzpicture}[scale=0.1471,transform shape,every node/.style={circle, draw, outer sep=0pt,minimum size=35pt,fill=danubeBlue!50, inner sep=0pt,font=\bfseries\fontsize{30}{35}\selectfont}]
            \begin{scope}[yscale=0.601]
            \input{pic/370.Danbube_river/ours2_1_0.01.tex}
            \end{scope}
        \end{tikzpicture}\hspace{1.5em}
        \begin{tikzpicture}[scale=0.1471,transform shape,every node/.style={circle, draw, outer sep=0pt,minimum size=35pt,fill=danubeBlue!50, inner sep=0pt,font=\bfseries\fontsize{30}{35}\selectfont}]
            \begin{scope}[yscale=0.601]
            \input{pic/370.Danbube_river/ours2_1_0.05.tex}
            \end{scope}
        \end{tikzpicture}
    \end{subfigure}\\[0.8em]
    \begin{subfigure}[b]{\textwidth}
        \centering
        \begin{tikzpicture}[scale=0.1471,transform shape,every node/.style={circle, draw, outer sep=0pt,minimum size=35pt,fill=danubeBlue!50, inner sep=0pt,font=\bfseries\fontsize{30}{35}\selectfont}]
            \begin{scope}[yscale=0.601]
            \input{pic/370.Danbube_river/ours2_5_0.001.tex}
            \end{scope}
        \end{tikzpicture}\hspace{1.5em}
        \begin{tikzpicture}[scale=0.1471,transform shape,every node/.style={circle, draw, outer sep=0pt,minimum size=35pt,fill=danubeBlue!50, inner sep=0pt,font=\bfseries\fontsize{30}{35}\selectfont}]
            \begin{scope}[yscale=0.601]
            \input{pic/370.Danbube_river/ours2_5_0.005.tex}
            \end{scope}
        \end{tikzpicture}\hspace{1.5em}
        \begin{tikzpicture}[scale=0.1471,transform shape,every node/.style={circle, draw, outer sep=0pt,minimum size=35pt,fill=danubeBlue!50, inner sep=0pt,font=\bfseries\fontsize{30}{35}\selectfont}]
            \begin{scope}[yscale=0.601]
            \input{pic/370.Danbube_river/ours2_5_0.01.tex}
            \end{scope}
        \end{tikzpicture}\hspace{1.5em}
        \begin{tikzpicture}[scale=0.1471,transform shape,every node/.style={circle, draw, outer sep=0pt,minimum size=35pt,fill=danubeBlue!50, inner sep=0pt,font=\bfseries\fontsize{30}{35}\selectfont}]
            \begin{scope}[yscale=0.601]
            \input{pic/370.Danbube_river/ours2_5_0.05.tex}
            \end{scope}
        \end{tikzpicture}
    \end{subfigure}\\[0.8em]
    \begin{subfigure}[b]{\textwidth}
        \centering
        \begin{tikzpicture}[scale=0.1471,transform shape,every node/.style={circle, draw, outer sep=0pt,minimum size=35pt,fill=danubeBlue!50, inner sep=0pt,font=\bfseries\fontsize{30}{35}\selectfont}]
            \begin{scope}[yscale=0.601]
            \input{pic/370.Danbube_river/ours2_10_0.001.tex}
            \end{scope}
        \end{tikzpicture}\hspace{1.5em}
        \begin{tikzpicture}[scale=0.1471,transform shape,every node/.style={circle, draw, outer sep=0pt,minimum size=35pt,fill=danubeBlue!50, inner sep=0pt,font=\bfseries\fontsize{30}{35}\selectfont}]
            \begin{scope}[yscale=0.601]
            \input{pic/370.Danbube_river/ours2_10_0.005.tex}
            \end{scope}
        \end{tikzpicture}\hspace{1.5em}
        \begin{tikzpicture}[scale=0.1471,transform shape,every node/.style={circle, draw, outer sep=0pt,minimum size=35pt,fill=danubeBlue!50, inner sep=0pt,font=\bfseries\fontsize{30}{35}\selectfont}]
            \begin{scope}[yscale=0.601]
            \input{pic/370.Danbube_river/ours2_10_0.01.tex}
            \end{scope}
        \end{tikzpicture}\hspace{1.5em}
        \begin{tikzpicture}[scale=0.1471,transform shape,every node/.style={circle, draw, outer sep=0pt,minimum size=35pt,fill=danubeBlue!50, inner sep=0pt,font=\bfseries\fontsize{30}{35}\selectfont}]
            \begin{scope}[yscale=0.601]
            \input{pic/370.Danbube_river/ours2_10_0.05.tex}
            \end{scope}
        \end{tikzpicture}
    \end{subfigure}
    \caption{Estimated graphs for Danube river dataset with $\tau = 2$ across different quantile level $q$ (rows: $0.995$, $0.99$, $0.95$, $0.9$) and $\alpha$ (columns: $0.001$, $0.005$, $0.01$, $0.05$).}
    \label{fig:danubeSensitivityTau2}
\end{figure}

\end{appendices}

\end{document}